\documentclass[12pt]{article}
\usepackage[margin=1.25in]{geometry}

\usepackage{amssymb}
\usepackage{amsmath}
\usepackage{graphicx}
\usepackage{wrapfig}
\usepackage{authblk}
\usepackage[all]{xy}
\usepackage{bbm}
\usepackage{float}
\usepackage{tikz}
\usetikzlibrary{arrows,calc,shapes,decorations.pathreplacing}
\usepackage{tikz}
\usetikzlibrary{arrows,calc,shapes,decorations.pathreplacing}
\usepackage{sidecap}
\usepackage{hyperref}
\hypersetup{pdftex,colorlinks=true,allcolors=blue}
\usepackage{hypcap}
\usepackage[font={small,it}]{caption}
\usepackage{pgfplots}
\usetikzlibrary{patterns}
\usepackage{enumerate}
\usepackage{subcaption} 

\newenvironment{customlegend}[1][]{%
    \begingroup
    \csname pgfplots@init@cleared@structures\endcsname
    \pgfplotsset{#1}%
}{%
    \csname pgfplots@createlegend\endcsname
    \endgroup
}%

\def\addlegendimage{\csname pgfplots@addlegendimage\endcsname}

\newtheorem{theorem}{Theorem}[section]

\newtheorem{proposition}[theorem]{Proposition}

\newtheorem{definition}[theorem]{Definition}
\newtheorem{remark}[theorem]{Remark}

\newenvironment{proof}[1][Proof]{\begin{trivlist}
\item[\hskip \labelsep {\bfseries #1}]}{\end{trivlist}}

\newcommand{\argmin}{\operatornamewithlimits{argmin}}
\newcommand{\qed}{\nobreak \ifvmode \relax \else
      \ifdim\lastskip<1.5em \hskip-\lastskip
      \hskip1.5em plus0em minus0.5em \fi \nobreak
      \vrule height0.75em width0.5em depth0.25em\fi}
      

\bibliographystyle{abbrv}

\begin{document}

\title{A strategic timing of arrivals to a linear slowdown processor sharing system\footnote{To appear in the European Journal of Operational Research}}
\author[1]{Liron Ravner}
\author[1]{Moshe Haviv}
\author[2]{Hai L. Vu}
\affil[1]{\small{Department of Statistics and the Federmann Center for the Study of Rationality, The Hebrew University of Jerusalem, Israel}}
\affil[2]{\small{Intelligent Transport Systems Lab, Swinburne University of Technology, Australia}}

\date{\today}
\maketitle

\begin{abstract}
We consider a discrete population of users with homogeneous service demand who need to decide when to arrive to a system in which the service rate deteriorates linearly with the number of users in the system. The users have heterogeneous desired departure times from the system, and their goal is to minimize a weighted sum of the travel time and square deviation from the desired departure times. Users join the system sequentially, according to the order of their desired departure times. We model this scenario as a non-cooperative game in which each user selects his actual arrival time. We present explicit equilibria solutions for a two-user example, namely the Subgame Perfect and Cournot Nash equilibria and show that multiple equilibria may exist. We further explain why a general solution for any number of users is computationally challenging. The difficulty lies in the fact that the objective functions are piecewise-convex, i.e., non-smooth and non-convex. As a result, the minimization of the costs relies on checking all arrival and departure order permutations, which is exponentially large with respect to the population size. Instead we propose an iterated best-response algorithm which can be efficiently studied numerically. Finally, we compare the equilibrium arrival profiles to a socially optimal solution and discuss the implications.
\end{abstract}

\section{Introduction}\label{sec:intro}
The strategic timing of arrivals to congested systems is relevant for various applications such as traffic, queueing and communication networks. We study a non-cooperative game in which atomic users need to time their arrival to a deterministic processor sharing system with linear slowdown. This may be the case on a ring road around a business district in which the density at any point on the road affects all of the road, and therefore arriving users can cause a slowdown even for users who arrived before them. Throughout the paper we refer to the model as a traffic network where users travel along a route at varying speeds. Nevertheless, our aim here is to provide a general analysis of the strategic arrival times to such a processor sharing system, and the results are not limited to the specific traffic application. The linear slowdown dynamic can be seen as a discrete variation of the Greenshield's fluid model (see for example \cite{MH1984}). This work complements \cite{RN2015} where the socially optimal arrival schedule of users to the same system was analysed. In this paper the choice of arrival times is made by the users themselves sequentially, according to the their desired departure times. Note that while all users are served simultaneously, the model presented here still maintains the First-In-First-Out property, and thus in the sequential game users leave the system in the same order they arrived. 

The study of departure time choice to a congested bottleneck goes back to Vickery \cite{V1969}, where a fluid queue dynamic was assumed. The research of fluid bottleneck models has evolved greatly since then, and we refer the reader to Arnott et al. \cite{ADL1993} and de Dios Ort{\'u}zar and Willumsen \cite{book_OW2011} and references therein. Otsubo and Rapoport analysed a non-fluid (atomic user) game with discrete arrival instances in \cite{OR2008}. An arrival time and route choice (dynamic user equilibrium) game for a route with linear slowdown was analysed using a mean field approach by Mahmassani and Herman in \cite{MH1984}.

A queueing theory approach to the strategic timinig of arrivals to a congested stochastic queue was developed by Glazer and Hassin in \cite{GH1983}. They introduced a game in which a discrete population of users, of a Poisson distributed size, choose arrival times to a single server exponential queue with the goal of minimizing waiting times. This led to another branch of research that relies on the stochastic properties of the queues, rather than fluid dynamics. Examples with a discrete deterministic population, as we assume in this work, are the works of Juneja and Shimkin in \cite{JS2013} (tardniess costs), Ravner \cite{R2014} (order penalties), and Haviv and Ravner \cite{HR2015} (loss system). All of the above assume random memoryless service times and a first come first served regime. An arrival time game to a processor sharing queue was studied by Raheja and Juneja in \cite{JR2014}, using a fluid approximation. In the queueing context our work is the first to analyse an arrival time game with a discrete population queue with heterogeneous users and deterministic service times.

\section{Traffic model}\label{sec:traffic_model}
In this section we introduce the model and show that despite its seeming simplicity, it in fact yields a complex arrival-departure dynamic. Suppose a set of atomic users $\mathcal{N}:=\{1,\ldots,N\}$ need to travel on a single route of length $1$. We define the travel speed on a segment at time $t$ as:
\begin{equation}\label{eq:speed}
v(t):=\beta-\alpha(q(t)-1),
\end{equation}
where $q(t)$ is the number of users on the segment at time $t$, $\beta>0$ is the free flow speed of a single user travelling alone, and $\alpha\geq 0$ is the slowdown parameter. We assume that $\beta-\alpha(N-1)>0$, in other words, this means that the travel speed is positive even if all users travel at the same time. Note that for $\alpha\geq\frac{\beta}{2}$, the only possible case is $N=2$. In Figure \ref{fig:service_dynamics} the system dynamics are illustrated as a function of the number of concurrent users. Observe that while travel speed decreases by definition, the overall service rate is non-monotone and concave. In particular, it is initially increasing, has a maximal throughput at level $q=\frac{\beta+\alpha}{2\alpha}$ and then decreases to almost zero when the system is very busy.

\begin{figure}[H]
\begin{subfigure}{\textwidth}
\centering
\begin{tikzpicture}[xscale=0.12,yscale=2.7]
  \def\xmin{0}
  \def\xmax{101}
  \def\ymin{0}
  \def\ymax{1.03}
  
    \draw[->] (\xmin,\ymin) -- (\xmax,\ymin) node[right] {$q$} ;
    \draw[->] (\xmin,\ymin) -- (\xmin,\ymax);
    \foreach \x in {0,10,20,30,40,50,60,70,80,90,100}
    \node at (\x,\ymin) [below] {\x};
    \foreach \y in {0,0.2,0.4,0.6,0.8,1}
    \node at (\xmin,\y) [left] {\y};
  
  \draw[red, dotted, thick,domain=0:100]  plot (\x, {1-0.01*\x});
  \draw[] (38,0.75) node[right,red] {$\beta-\alpha(q-1)$};  
\end{tikzpicture}
\caption{Travel speed}
\end{subfigure}

\begin{subfigure}{\textwidth}
\centering
\begin{tikzpicture}[xscale=0.12,yscale=0.108]
  \def\xmin{0}
  \def\xmax{101}
  \def\ymin{0}
  \def\ymax{26}
  
    \draw[->] (\xmin,\ymin) -- (\xmax,\ymin) node[right] {$q$} ;
    \draw[->] (\xmin,\ymin) -- (\xmin,\ymax);
        \foreach \x in {0,10,20,30,40,50,60,70,80,90,100}
    \node at (\x,\ymin) [below] {\x};
    \foreach \y in {0,5,10,15,20,25}
    \node at (\xmin,\y) [left] {\y};
  
  \draw[blue, densely dotted, thick,domain=0:100]  plot (\x, {\x*(1-0.01*\x)});
  \draw[] (38,17) node[right,blue] {$q(\beta-\alpha(q-1))$};
  
\end{tikzpicture}
\caption{Aggregate throughput}
\end{subfigure}
\caption{The individual travel speed and aggregate throughput as the number of users in the system increases. Parameter values: $\beta=1$ and $\alpha=0.01$}
\label{fig:service_dynamics}
\end{figure}
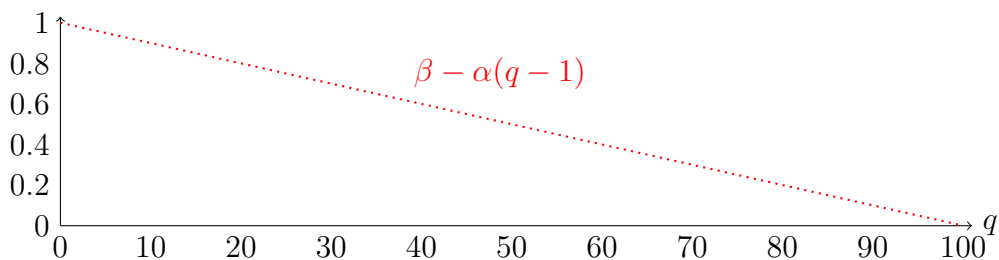
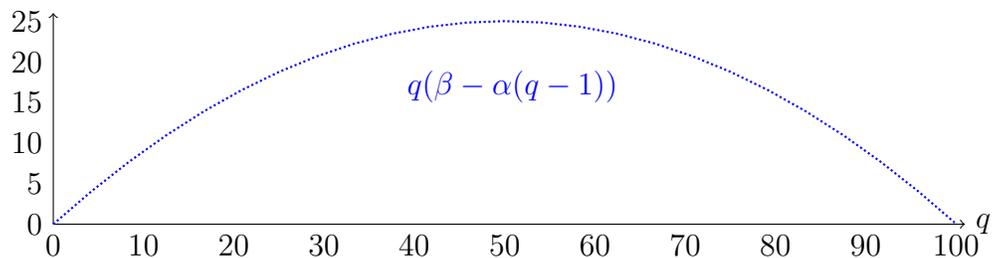

Every user $i\in\mathcal{N}$ has a desired departure time from the system, denoted by $d_i^*$. Without loss of generality we assume that the desired departure times are ordered: $d_i^*\leq d_j^*,\ \forall i<j$. The action of user $i$ is choosing an arrival time $a_i\in\mathbbm{R}$. Denote the arrival and departure vectors of all users by $\mathbf{a}:=(a_1,\ldots,a_N)$ and $\mathbf{d}:=(d_1,\ldots,d_N)$, respectively. The cost incurred by user $i$ is
\begin{equation}\label{eq:costUserShort}
c_i(\mathbf{a})= (d_i-d_i^*)^2 \,+ \gamma \,  (d_i-a_i).
\end{equation}
This cost function is a combination of a quadratic penalty for any deviation from the desired departure time, be it early or late, and a linear penalty for the total travel time. We focus on this cost function for the sake of a clear presentation, but all of our analysis can be applied to any convex function (of the deviation and travel time terms) in a straightforward manner. We further discuss on how this generalization can be made in the concluding remarks in Section \ref{sec:conclusion}. The minimal cost of any user is $\frac{\gamma}{\beta}$, and can only be obtained by travelling alone at free flow speed and leaving at exactly the desired time, $d_i^*$ for user $i$. 

The effective departure times of users are determined by $\mathbf{a}$ and the travel dynamics defined in \eqref{eq:speed}:
\begin{equation}
\label{eq:dynamics}
1 = \int_{a_i}^{d_i} v\left(t\right) dt,
\qquad  i\in\mathcal{N},
\end{equation}
where $q(t) = \sum_{i\in\mathcal{N}} {\mathbbm 1}_{\{t \in [a_i,d_i]\}}$. Using \eqref{eq:speed} we get a set of $N$ equations for $\mathbf{d}$,
\[
1 = (d_i-a_i)(\beta+\alpha)-\alpha\int_{a_i}^{d_i} \sum_{j\in\mathcal{N}}  {\mathbbm 1}\{t \in [a_j,d_j]\}dt,
\qquad
i\in\mathcal{N}.
\]
These $N$ equations can be treated as equations for the unknowns ${\mathbf d}$, given ${\mathbf a}$ or vice-versa. In \cite{RN2015} it was shown that the departure dynamics for an ordered vector $\mathbf{a}$ are given by
\begin{equation*}\label{eq:daMatrix}
D\, {\mathbf d} - A \, {\mathbf a} = {\mathbf 1},
\end{equation*}
where $A\in\mathbbm{R}^N$ and $D\in\mathbbm{R}^N$ are defined as follows:
\[
A_{ij}:=
\left\{
	\begin{array}{ll}
		\beta-\alpha(i-h_i) \mbox{, } &  i=j \\
		-\alpha \mbox{, } &  i+1\leq j\leq k_i  \\
		0\mbox{, } &  o.w.
	\end{array}
\right. , \ and
\]
\[
D_{ij}:=
\left\{
	\begin{array}{ll}
		\beta-\alpha(k_i-i) \mbox{, } &  i=j \\
		-\alpha \mbox{, } &  h_i\leq j\leq i-1  \\
		0\mbox{, } &  o.w.
	\end{array}
\right. ,
\]
with $k_i:=\max \big\{ k \in\mathcal{N} ~:~ a_k \le d_i \big\}$ and $h_i:=\min 
\big\{ h \in\mathcal{N} ~:~ d_h \ge a_i\big\}$.
A direct result of this is the recursive formula
\begin{equation}\label{eq:d_recursive}
d_{i}=
\frac{
1 + \left(\beta - \alpha(i-h_{i})\right)a_i + \alpha \left(
\sum_{j=h_{i}}^{i-1} d_j - \sum_{j=i+1}^{k_{i}} a_j
\right)
}{\beta-\alpha(k_{i}-i)},\ i\in\mathcal{N}.
\end{equation}
Using an iterative algorithm we can compute the unique $\mathbf{d}$ for any given $\mathbf{a}$ (or vice versa) with at most $2N$ computations. At this point it is important to observe that the vector $\mathbf{k}:=(k_1,\ldots,k_N)$ defines the combined order permutation of all arrivals and departures. For example, if $N=3$, the profile $(a_1<a_2<d_1<a_3<d_2<d_3)$ corresponds to $\mathbf{k}:=(2,3,3)$. In this example there is an overlay between users $1$ and $2$ during the interval $[a_2,d_1]$, and between users $2$ and $3$ during the interval $[a_3,d_2]$. Therefore, given $\mathbf{k}$ we know the exact number of users and their travel speed at any point in time. Furthermore, as long as there is no change in the permutation, the departure times are continuous with the arrival times with a known linear coefficient, as shown in Figure \ref{fig:departure_arrival}. Note that \eqref{eq:d_recursive} simply solves the traffic dynamics without any consideration of the cost function \eqref{eq:costUserShort} that users wish to minimize.

If we denote the set of all possible arrival-departure permutations by
\[
\mathcal{K}:=\left\lbrace\mathbf{k}\in{\cal N}^N \,:\,  k_N=N,\, k_i \le k_{j} \ \forall i \le j\right\rbrace,
\]
then $|{\cal K}| = \frac{{2N \choose N}}{N+1}$. This follows by observing that the elements of ${\cal K}$ correspond uniquely to lattice paths in the $N \times N$ grid from bottom-left to top-right with up and right movements without crossing the diagonal. The number of such elements is the Catalan number (see \cite{RN2015} for more details).

The relation between arrival and departure times, defined in \eqref{eq:dynamics}, is in fact a set of piecewise-linear equations. This is illustrated in Figure \ref{fig:departure_arrival} by changing the arrival time of a single user while keeping all others fixed. 

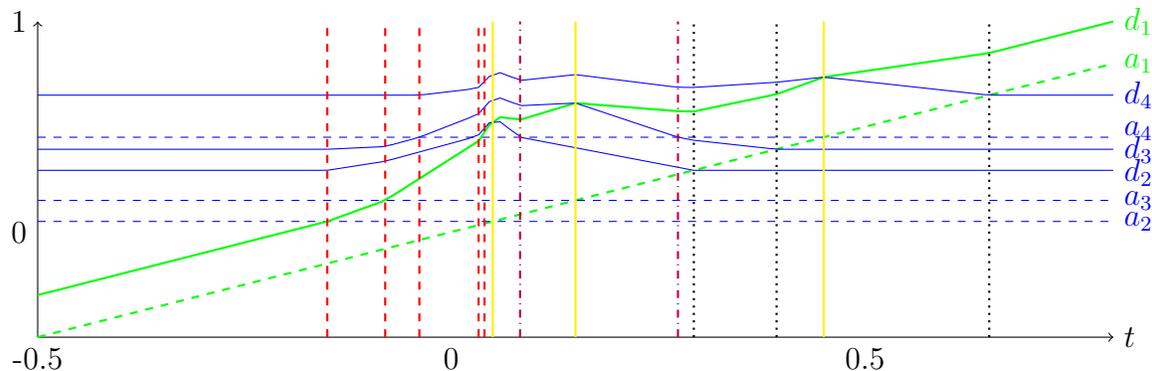
\begin{figure}[H]
\begin{tikzpicture}[xscale=11,yscale=2.8]
  \def\xmin{-0.5}
  \def\xmax{0.8}
  \def\ymin{-0.5}
  \def\ymax{1}
    \draw[->] (\xmin,\ymin) -- (\xmax,\ymin) node[right] {$t$} ;
    \draw[->] (\xmin,\ymin) -- (\xmin,\ymax); 
    \foreach \x in {-0.5,0,0.5}
    \node at (\x,\ymin) [below] {\x};
    \foreach \y in {0,1}
    \node at (\xmin,\y) [left] {\y};
    
    \draw[smooth,dashed,green,thick] (-0.5,-0.5)--	(-0.1621,-0.1621)--	(-0.1491,-0.1491)--	(-0.1361,-0.1361)--	(-0.1231,-0.1231)--	(-0.1101,-0.1101)--	(-0.0971,-0.0971)--	(-0.0841,-0.0841)--	(-0.0711,-0.0711)--	(-0.0581,-0.0581)--	(-0.0451,-0.0451)--	(-0.0321,-0.0321)--	(-0.0191,-0.0191)--	(-0.0061,-0.0061)--	(0.0069,0.0069)--	(0.0199,0.0199)--	(0.0329,0.0329)--	(0.0459,0.0459)--	(0.0589,0.0589)--	(0.0719,0.0719)--	(0.0849,0.0849)--	(0.0979,0.0979)--	(0.1109,0.1109)--	(0.1239,0.1239)--	(0.1369,0.1369)--	(0.1499,0.1499)--	(0.1629,0.1629)--	(0.1759,0.1759)--	(0.1889,0.1889)--	(0.2019,0.2019)--	(0.2149,0.2149)--	(0.2279,0.2279)--	(0.2409,0.2409)--	(0.2539,0.2539)--	(0.2669,0.2669)--	(0.2799,0.2799)--	(0.2929,0.2929)--	(0.3059,0.3059)--	(0.3189,0.3189)--	(0.3319,0.3319)--	(0.3449,0.3449)--	(0.3579,0.3579)--	(0.3709,0.3709)--	(0.3839,0.3839)--	(0.3969,0.3969)--	(0.4099,0.4099)--	(0.4229,0.4229)--	(0.4359,0.4359)--	(0.4489,0.4489)--	(0.4619,0.4619)--	(0.4749,0.4749)--	(0.4879,0.4879)--	(0.5009,0.5009)--	(0.5139,0.5139)--	(0.5269,0.5269)--	(0.5399,0.5399)--	(0.5529,0.5529)--	(0.5659,0.5659)--	(0.5789,0.5789)--	(0.5919,0.5919)--	(0.6049,0.6049)--	(0.6179,0.6179)--	(0.6309,0.6309)--	(0.6439,0.6439)--	(0.6569,0.6569)--	(0.6699,0.6699)--	(0.6829,0.6829)--	(0.6959,0.6959)--	(0.7089,0.7089)--	(0.7219,0.7219)--	(0.7349,0.7349)--	(0.7479,0.7479)--	(0.7609,0.7609)--	(0.7739,0.7739)--	(0.7869,0.7869)--	(0.7999,0.7999);
    
     \draw[smooth,dashed,blue] (-0.5,0.05) -- (0.8,0.05);
     
     \draw[smooth,dashed,blue] (-0.5,0.15) -- (0.8,0.15);
     
     \draw[smooth,dashed,blue] (-0.5,0.45) -- (0.8,0.45);

    \draw[smooth,green,thick] (-0.5,-0.3)--	(-0.1621,0.0379)--	(-0.1491,0.05128571)--	(-0.1361,0.06985714)--	(-0.1231,0.08842857)--	(-0.1101,0.107)--	(-0.0971,0.12557143)--	(-0.0841,0.14414286)--	(-0.0711,0.17225)--	(-0.0581,0.20475)--	(-0.0451,0.23725)--	(-0.0321,0.26975)--	(-0.0191,0.30225)--	(-0.0061,0.33475)--	(0.0069,0.36725)--	(0.0199,0.39975)--	(0.0329,0.43225)--	(0.0459,0.509)--	(0.0589,0.54555)--	(0.0719,0.53905)--	(0.0849,0.53519375)--	(0.0979,0.55063125)--	(0.1109,0.56606875)--	(0.1239,0.58150625)--	(0.1369,0.59694375)--	(0.1499,0.61238125)--	(0.1629,0.60846875)--	(0.1759,0.60440625)--	(0.1889,0.60034375)--	(0.2019,0.59628125)--	(0.2149,0.59221875)--	(0.2279,0.58815625)--	(0.2409,0.58409375)--	(0.2539,0.58003125)--	(0.2669,0.57596875)--	(0.2799,0.57337449)--	(0.2929,0.57248397)--	(0.3059,0.58309621)--	(0.3189,0.59370845)--	(0.3319,0.6043207)--	(0.3449,0.61493294)--	(0.3579,0.62554519)--	(0.3709,0.63615743)--	(0.3839,0.64676968)--	(0.3969,0.65985714)--	(0.4099,0.67842857)--	(0.4229,0.697)--	(0.4359,0.71557143)--	(0.4489,0.73414286)--	(0.4619,0.74251429)--	(0.4749,0.74994286)--	(0.4879,0.75737143)--	(0.5009,0.7648)--	(0.5139,0.77222857)--	(0.5269,0.77965714)--	(0.5399,0.78708571)--	(0.5529,0.79451429)--	(0.5659,0.80194286)--	(0.5789,0.80937143)--	(0.5919,0.8168)--	(0.6049,0.82422857)--	(0.6179,0.83165714)--	(0.6309,0.83908571)--	(0.6439,0.84651429)--	(0.6569,0.8569)--	(0.6699,0.8699)--	(0.6829,0.8829)--	(0.6959,0.8959)--	(0.7089,0.9089)--	(0.7219,0.9219)--	(0.7349,0.9349)--	(0.7479,0.9479)--	(0.7609,0.9609)--	(0.7739,0.9739)--	(0.7869,0.9869)--	(0.7999,0.9999);

\draw[smooth,blue] (-0.5,0.2928571)--	(-0.1621,0.2928571)--	(-0.1491,0.2934082)--	(-0.1361,0.3013673)--	(-0.1231,0.3093265)--	(-0.1101,0.3172857)--	(-0.0971,0.3252449)--	(-0.0841,0.3332041)--	(-0.0711,0.34525)--	(-0.0581,0.3591786)--	(-0.0451,0.3731071)--	(-0.0321,0.3870357)--	(-0.0191,0.4009643)--	(-0.0061,0.4148929)--	(0.0069,0.4288214)--	(0.0199,0.44275)--	(0.0329,0.4616875)--	(0.0459,0.51925)--	(0.0589,0.5233)--	(0.0719,0.4843)--	(0.0849,0.448825)--	(0.0979,0.439075)--	(0.1109,0.429325)--	(0.1239,0.419575)--	(0.1369,0.409825)--	(0.1499,0.400075)--	(0.1629,0.390325)--	(0.1759,0.380575)--	(0.1889,0.370825)--	(0.2019,0.361075)--	(0.2149,0.351325)--	(0.2279,0.341575)--	(0.2409,0.331825)--	(0.2539,0.322075)--	(0.2669,0.312325)--	(0.2799,0.302575)--	(0.2929,0.2928571)--	(0.3059,0.2928571)--	(0.3189,0.2928571)--	(0.3319,0.2928571)--	(0.3449,0.2928571)--	(0.3579,0.2928571)--	(0.3709,0.2928571)--	(0.3839,0.2928571)--	(0.3969,0.2928571)--	(0.4099,0.2928571)--	(0.4229,0.2928571)--	(0.4359,0.2928571)--	(0.4489,0.2928571)--	(0.4619,0.2928571)--	(0.4749,0.2928571)--	(0.4879,0.2928571)--	(0.5009,0.2928571)--	(0.5139,0.2928571)--	(0.5269,0.2928571)--	(0.5399,0.2928571)--	(0.5529,0.2928571)--	(0.5659,0.2928571)--	(0.5789,0.2928571)--	(0.5919,0.2928571)--	(0.6049,0.2928571)--	(0.6179,0.2928571)--	(0.6309,0.2928571)--	(0.6439,0.2928571)--	(0.6569,0.2928571)--	(0.6699,0.2928571)--	(0.6829,0.2928571)--	(0.6959,0.2928571)--	(0.7089,0.2928571)--	(0.7219,0.2928571)--	(0.7349,0.2928571)--	(0.7479,0.2928571)--	(0.7609,0.2928571)--	(0.7739,0.2928571)--	(0.7869,0.2928571)--	(0.7999,0.2928571);

\draw[smooth,blue] (-0.5,0.3928571)--	(-0.1621,0.3928571)--	(-0.1491,0.3930224)--	(-0.1361,0.3954102)--	(-0.1231,0.397798)--	(-0.1101,0.4001857)--	(-0.0971,0.4025735)--	(-0.0841,0.4049612)--	(-0.0711,0.41525)--	(-0.0581,0.4291786)--	(-0.0451,0.4431071)--	(-0.0321,0.460051)--	(-0.0191,0.479949)--	(-0.0061,0.4998469)--	(0.0069,0.5197449)--	(0.0199,0.5396429)--	(0.0329,0.5616875)--	(0.0459,0.61925)--	(0.0589,0.63665)--	(0.0719,0.61715)--	(0.0849,0.6002937)--	(0.0979,0.6027312)--	(0.1109,0.6051687)--	(0.1239,0.6076062)--	(0.1369,0.6100437)--	(0.1499,0.6124813)--	(0.1629,0.5955687)--	(0.1759,0.5785062)--	(0.1889,0.5614437)--	(0.2019,0.5443812)--	(0.2149,0.5273188)--	(0.2279,0.5102562)--	(0.2409,0.4931938)--	(0.2539,0.4761313)--	(0.2669,0.4590687)--	(0.2799,0.4454321)--	(0.2929,0.4356959)--	(0.3059,0.4301245)--	(0.3189,0.4245531)--	(0.3319,0.4189816)--	(0.3449,0.4134102)--	(0.3579,0.4078388)--	(0.3709,0.4022673)--	(0.3839,0.3966959)--	(0.3969,0.3928571)--	(0.4099,0.3928571)--	(0.4229,0.3928571)--	(0.4359,0.3928571)--	(0.4489,0.3928571)--	(0.4619,0.3928571)--	(0.4749,0.3928571)--	(0.4879,0.3928571)--	(0.5009,0.3928571)--	(0.5139,0.3928571)--	(0.5269,0.3928571)--	(0.5399,0.3928571)--	(0.5529,0.3928571)--	(0.5659,0.3928571)--	(0.5789,0.3928571)--	(0.5919,0.3928571)--	(0.6049,0.3928571)--	(0.6179,0.3928571)--	(0.6309,0.3928571)--	(0.6439,0.3928571)--	(0.6569,0.3928571)--	(0.6699,0.3928571)--	(0.6829,0.3928571)--	(0.6959,0.3928571)--	(0.7089,0.3928571)--	(0.7219,0.3928571)--	(0.7349,0.3928571)--	(0.7479,0.3928571)--	(0.7609,0.3928571)--	(0.7739,0.3928571)--	(0.7869,0.3928571)--	(0.7999,0.3928571);

\draw[smooth,blue] (-0.5,0.65)--	(-0.1621,0.65)--	(-0.1491,0.65)--	(-0.1361,0.65)--	(-0.1231,0.65)--	(-0.1101,0.65)--	(-0.0971,0.65)--	(-0.0841,0.65)--	(-0.0711,0.65)--	(-0.0581,0.65)--	(-0.0451,0.65)--	(-0.0321,0.6530153)--	(-0.0191,0.6589847)--	(-0.0061,0.6649541)--	(0.0069,0.6709235)--	(0.0199,0.6768929)--	(0.0329,0.6870125)--	(0.0459,0.73925)--	(0.0589,0.75665)--	(0.0719,0.73715)--	(0.0849,0.7206462)--	(0.0979,0.7260087)--	(0.1109,0.7313712)--	(0.1239,0.7367337)--	(0.1369,0.7420963)--	(0.1499,0.7474588)--	(0.1629,0.7412112)--	(0.1759,0.7348737)--	(0.1889,0.7285362)--	(0.2019,0.7221988)--	(0.2149,0.7158613)--	(0.2279,0.7095238)--	(0.2409,0.7031863)--	(0.2539,0.6968488)--	(0.2669,0.6905112)--	(0.2799,0.6870123)--	(0.2929,0.6867452)--	(0.3059,0.6899289)--	(0.3189,0.6931125)--	(0.3319,0.6962962)--	(0.3449,0.6994799)--	(0.3579,0.7026636)--	(0.3709,0.7058472)--	(0.3839,0.7090309)--	(0.3969,0.7129571)--	(0.4099,0.7185286)--	(0.4229,0.7241)--	(0.4359,0.7296714)--	(0.4489,0.7352429)--	(0.4619,0.7306143)--	(0.4749,0.7250429)--	(0.4879,0.7194714)--	(0.5009,0.7139)--	(0.5139,0.7083286)--	(0.5269,0.7027571)--	(0.5399,0.6971857)--	(0.5529,0.6916143)--	(0.5659,0.6860429)--	(0.5789,0.6804714)--	(0.5919,0.6749)--	(0.6049,0.6693286)--	(0.6179,0.6637571)--	(0.6309,0.6581857)--	(0.6439,0.6526143)--	(0.6569,0.65)--	(0.6699,0.65)--	(0.6829,0.65)--	(0.6959,0.65)--	(0.7089,0.65)--	(0.7219,0.65)--	(0.7349,0.65)--	(0.7479,0.65)--	(0.7609,0.65)--	(0.7739,0.65)--	(0.7869,0.65)--	(0.7999,0.65);
     
     \draw[smooth,yellow,thick] (0.05,-0.5) -- (0.05,1);
     \draw[smooth,yellow,thick] (0.15,-0.5) -- (0.15,1);
     \draw[smooth,yellow,thick] (0.45,-0.5) -- (0.45,1);
     
     \draw[smooth,dotted,black,thick] (0.293,-0.5) -- (0.293,1);
     \draw[smooth,dotted,black,thick] (0.393,-0.5) -- (0.393,1);
     \draw[smooth,dotted,black,thick] (0.65,-0.5) -- (0.65,1);
     
     \draw[smooth,dashed,red,thick] (-0.15,-0.5) -- (-0.15,1);
     \draw[smooth, dashed,red,thick] (-0.08,-0.5) -- (-0.08,1);
     \draw[smooth, dashed,red,thick] (-0.0387,-0.5) -- (-0.0387,1);
	 \draw[smooth, dashed,red,thick] (0.0329,-0.5) -- (0.0329,1);     
     \draw[smooth, dashed,red,thick] (0.04,-0.5) -- (0.04,1);     
     
     \draw[smooth, dashdotted,purple,thick] (0.083,-0.5) -- (0.083,1);
     \draw[smooth, dashdotted,purple,thick] (0.2738,-0.5) -- (0.2738,1);
     
     \draw[] (0.8,0.8) node[right,green] {$a_1$};
     \draw[] (0.8,1) node[right,green] {$d_1$};
     \draw[] (0.8,0.05) node[right,blue] {$a_2$};
     \draw[] (0.8,0.283) node[right,blue] {$d_2$};
     \draw[] (0.8,0.15) node[right,blue] {$a_3$};
     \draw[] (0.8,0.393) node[right,blue] {$d_3$};
     \draw[] (0.8,0.48) node[right,blue] {$a_4$};
     \draw[] (0.8,0.65) node[right,blue] {$d_4$};
     
%
%
%
\end{tikzpicture}
\caption{\textbf{Arrival-departure dynamics:} An illustration of the effect of changing the arrival time of one user $1$ (green user) while keeping the arrival times of the other three users $(2,3,4)$ (blue) constant. The horizontal axis is the arrival time of the green user. The vertical axis shows the arrival and departure times of all other users: Arrival times of blue users are the blue dashed horizontal lines at $(0.05, 0.15, 0.45)$ and the arrival time of the green user varies and is thus represented by the dashed diagonal line. Departure times are the solid lines. The numerous vertical dotted lines illustrate ``break-points". In between these lines, the effect of changing the green user's arrival on all users is affine. The red vertical line marks a point where a departure overtakes an arrival, the yellow vertical line marks a point where an arrival overtakes an arrival and the black marks a point where an arrival overtakes a departure. The parameter values of the example are: $\alpha=1.5$ and $\beta=5$.}
\label{fig:departure_arrival}
\end{figure}

The piecwise linear relation between arrival and departure times implies that the cost of any single user is not convex with respect to his own arrival time, even though the cost function, defined in \eqref{eq:costUserShort}, has a convex form. This piecewise-convex behaviour is illustrated for the same numerical example in Figure \ref{fig:cost_arrival}. We can see that the cost of any user is not convex, monotone, or differentiable with respect to a change in arrival time of any user (including himself). Clearly, this complex cost structure leads to difficulties both in finding a socially optimal arrival schedule and a Nash equilibrium schedule.

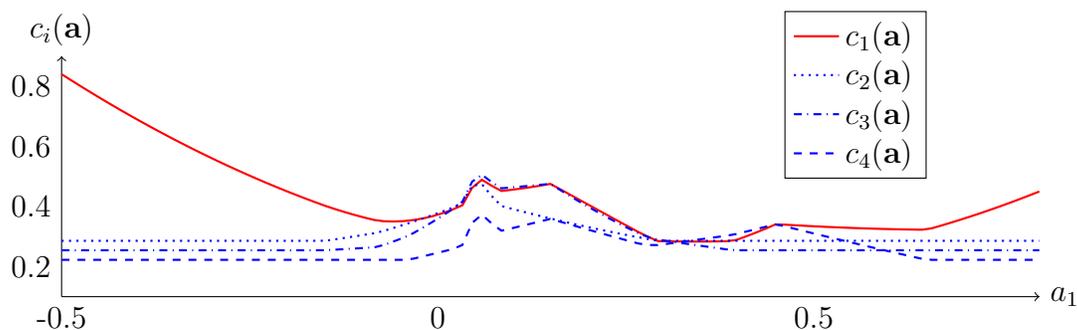
\begin{figure}[H]
\centering
\begin{tikzpicture}[xscale=10,yscale=4]
  \def\xmin{-0.5}
  \def\xmax{0.8}
  \def\ymin{0.1}
  \def\ymax{0.9}
    \draw[->] (\xmin,\ymin) -- (\xmax,\ymin) node[right] {$a_1$} ;
    \draw[->] (\xmin,\ymin) -- (\xmin,\ymax) node[above] {$c_i(\mathbf{a})$} ;
    \foreach \x in {-0.5,0,0.5}
    \node at (\x,\ymin) [below] {\x};
    \foreach \y in {0.2,0.4,0.6,0.8}
    \node at (\xmin,\y) [left] {\y};
    
    \draw[smooth,red, thick] (-0.5,0.84)--	(-0.4872,0.8196838)--	(-0.4742,0.7993856)--	(-0.4612,0.7794254)--	(-0.4482,0.7598032)--	(-0.4352,0.740519)--	(-0.4222,0.7215728)--	(-0.4092,0.7029646)--	(-0.3962,0.6846944)--	(-0.3832,0.6667622)--	(-0.3702,0.649168)--	(-0.3572,0.6319118)--	(-0.3442,0.6149936)--	(-0.3312,0.5984134)--	(-0.3182,0.5821712)--	(-0.3052,0.566267)--	(-0.2922,0.5507008)--	(-0.2792,0.5354726)--	(-0.2662,0.5205824)--	(-0.2532,0.5060302)--	(-0.2402,0.491816)--	(-0.2272,0.4779398)--	(-0.2142,0.4644016)--	(-0.2012,0.4512014)--	(-0.1882,0.4383392)--	(-0.1752,0.425815)--	(-0.1622,0.4136288)--	(-0.1492,0.4018156)--	(-0.1362,0.3910601)--	(-0.1232,0.3809944)--	(-0.1102,0.3716184)--	(-0.0972,0.3629323)--	(-0.0842,0.354936)--	(-0.0712,0.350784)--	(-0.0582,0.3500202)--	(-0.0452,0.351369)--	(-0.0322,0.3548303)--	(-0.0192,0.360404)--	(-0.0062,0.3680903)--	(0.0068,0.377889)--	(0.0198,0.3898003)--	(0.0328,0.403824)--	(0.0458,0.462264)--	(0.0588,0.4888794)--	(0.0718,0.4688288)--	(0.0848,0.4515053)--	(0.0978,0.455264)--	(0.1108,0.4594994)--	(0.1238,0.4642114)--	(0.1368,0.4694001)--	(0.1498,0.4750654)--	(0.1628,0.4574722)--	(0.1758,0.4395447)--	(0.1888,0.4216501)--	(0.2018,0.4037886)--	(0.2148,0.3859601)--	(0.2278,0.3681645)--	(0.2408,0.350402)--	(0.2538,0.3326725)--	(0.2668,0.314976)--	(0.2798,0.2989665)--	(0.2928,0.2849025)--	(0.3058,0.284106)--	(0.3188,0.2835928)--	(0.3318,0.2833049)--	(0.3448,0.2832421)--	(0.3578,0.2834047)--	(0.3708,0.2837924)--	(0.3838,0.2844054)--	(0.3968,0.2884229)--	(0.4098,0.3002715)--	(0.4228,0.3128099)--	(0.4358,0.326038)--	(0.4488,0.339956)--	(0.4618,0.3394426)--	(0.4748,0.3375286)--	(0.4878,0.3357249)--	(0.5008,0.3340316)--	(0.5138,0.3324487)--	(0.5268,0.3309762)--	(0.5398,0.329614)--	(0.5528,0.3283622)--	(0.5658,0.3272207)--	(0.5788,0.3261896)--	(0.5918,0.3252689)--	(0.6048,0.3244585)--	(0.6178,0.3237586)--	(0.6308,0.3231689)--	(0.6438,0.3226897)--	(0.6568,0.3273062)--	(0.6698,0.336752)--	(0.6828,0.3465358)--	(0.6958,0.3566576)--	(0.7088,0.3671174)--	(0.7218,0.3779152)--	(0.7348,0.389051)--	(0.7478,0.4005248)--	(0.7608,0.4123366)--	(0.7738,0.4244864)--	(0.7868,0.4369742)--	(0.7998,0.4498);

    \draw[smooth,blue,dotted, thick] (-0.5,0.2857653)--	(-0.4872,0.2857653)--	(-0.4742,0.2857653)--	(-0.4612,0.2857653)--	(-0.4482,0.2857653)--	(-0.4352,0.2857653)--	(-0.4222,0.2857653)--	(-0.4092,0.2857653)--	(-0.3962,0.2857653)--	(-0.3832,0.2857653)--	(-0.3702,0.2857653)--	(-0.3572,0.2857653)--	(-0.3442,0.2857653)--	(-0.3312,0.2857653)--	(-0.3182,0.2857653)--	(-0.3052,0.2857653)--	(-0.2922,0.2857653)--	(-0.2792,0.2857653)--	(-0.2662,0.2857653)--	(-0.2532,0.2857653)--	(-0.2402,0.2857653)--	(-0.2272,0.2857653)--	(-0.2142,0.2857653)--	(-0.2012,0.2857653)--	(-0.1882,0.2857653)--	(-0.1752,0.2857653)--	(-0.1622,0.2857653)--	(-0.1492,0.2860524)--	(-0.1362,0.2907854)--	(-0.1232,0.295645)--	(-0.1102,0.3006314)--	(-0.0972,0.3057444)--	(-0.0842,0.3109842)--	(-0.0712,0.3191236)--	(-0.0582,0.3289323)--	(-0.0452,0.339129)--	(-0.0322,0.3497137)--	(-0.0192,0.3606864)--	(-0.0062,0.3720472)--	(0.0068,0.3837959)--	(0.0198,0.3959327)--	(0.0328,0.4129823)--	(0.0458,0.4688423)--	(0.0588,0.474157)--	(0.0718,0.4348372)--	(0.0848,0.4015112)--	(0.0978,0.3928527)--	(0.1108,0.3843844)--	(0.1238,0.3761061)--	(0.1368,0.368018)--	(0.1498,0.36012)--	(0.1628,0.3524122)--	(0.1758,0.3448944)--	(0.1888,0.3375668)--	(0.2018,0.3304293)--	(0.2148,0.323482)--	(0.2278,0.3167247)--	(0.2408,0.3101576)--	(0.2538,0.3037806)--	(0.2668,0.2975938)--	(0.2798,0.291597)--	(0.2928,0.2857904)--	(0.3058,0.2857653)--	(0.3188,0.2857653)--	(0.3318,0.2857653)--	(0.3448,0.2857653)--	(0.3578,0.2857653)--	(0.3708,0.2857653)--	(0.3838,0.2857653)--	(0.3968,0.2857653)--	(0.4098,0.2857653)--	(0.4228,0.2857653)--	(0.4358,0.2857653)--	(0.4488,0.2857653)--	(0.4618,0.2857653)--	(0.4748,0.2857653)--	(0.4878,0.2857653)--	(0.5008,0.2857653)--	(0.5138,0.2857653)--	(0.5268,0.2857653)--	(0.5398,0.2857653)--	(0.5528,0.2857653)--	(0.5658,0.2857653)--	(0.5788,0.2857653)--	(0.5918,0.2857653)--	(0.6048,0.2857653)--	(0.6178,0.2857653)--	(0.6308,0.2857653)--	(0.6438,0.2857653)--	(0.6568,0.2857653)--	(0.6698,0.2857653)--	(0.6828,0.2857653)--	(0.6958,0.2857653)--	(0.7088,0.2857653)--	(0.7218,0.2857653)--	(0.7348,0.2857653)--	(0.7478,0.2857653)--	(0.7608,0.2857653)--	(0.7738,0.2857653)--	(0.7868,0.2857653)--	(0.7998,0.2857653);

\draw[smooth,blue,dashdotted, thick] (-0.5,0.2543367)--	(-0.4872,0.2543367)--	(-0.4742,0.2543367)--	(-0.4612,0.2543367)--	(-0.4482,0.2543367)--	(-0.4352,0.2543367)--	(-0.4222,0.2543367)--	(-0.4092,0.2543367)--	(-0.3962,0.2543367)--	(-0.3832,0.2543367)--	(-0.3702,0.2543367)--	(-0.3572,0.2543367)--	(-0.3442,0.2543367)--	(-0.3312,0.2543367)--	(-0.3182,0.2543367)--	(-0.3052,0.2543367)--	(-0.2922,0.2543367)--	(-0.2792,0.2543367)--	(-0.2662,0.2543367)--	(-0.2532,0.2543367)--	(-0.2402,0.2543367)--	(-0.2272,0.2543367)--	(-0.2142,0.2543367)--	(-0.2012,0.2543367)--	(-0.1882,0.2543367)--	(-0.1752,0.2543367)--	(-0.1622,0.2543367)--	(-0.1492,0.2544522)--	(-0.1362,0.2563347)--	(-0.1232,0.2582286)--	(-0.1102,0.2601339)--	(-0.0972,0.2620506)--	(-0.0842,0.2639787)--	(-0.0712,0.2723436)--	(-0.0582,0.2841023)--	(-0.0452,0.296249)--	(-0.0322,0.3115061)--	(-0.0192,0.3302041)--	(-0.0062,0.349694)--	(0.0068,0.3699757)--	(0.0198,0.3910492)--	(0.0328,0.4152822)--	(0.0458,0.4825423)--	(0.0588,0.5055142)--	(0.0718,0.4810593)--	(0.0848,0.4603301)--	(0.0978,0.4632624)--	(0.1108,0.4662065)--	(0.1238,0.4691626)--	(0.1368,0.4721305)--	(0.1498,0.4751103)--	(0.1628,0.4548585)--	(0.1758,0.4348214)--	(0.1888,0.4153665)--	(0.2018,0.3964939)--	(0.2148,0.3782035)--	(0.2278,0.3604954)--	(0.2408,0.3433696)--	(0.2538,0.326826)--	(0.2668,0.3108646)--	(0.2798,0.2984766)--	(0.2928,0.2898843)--	(0.3058,0.2850439)--	(0.3188,0.2802817)--	(0.3318,0.2755815)--	(0.3448,0.2709434)--	(0.3578,0.2663674)--	(0.3708,0.2618535)--	(0.3838,0.2574017)--	(0.3968,0.2543367)--	(0.4098,0.2543367)--	(0.4228,0.2543367)--	(0.4358,0.2543367)--	(0.4488,0.2543367)--	(0.4618,0.2543367)--	(0.4748,0.2543367)--	(0.4878,0.2543367)--	(0.5008,0.2543367)--	(0.5138,0.2543367)--	(0.5268,0.2543367)--	(0.5398,0.2543367)--	(0.5528,0.2543367)--	(0.5658,0.2543367)--	(0.5788,0.2543367)--	(0.5918,0.2543367)--	(0.6048,0.2543367)--	(0.6178,0.2543367)--	(0.6308,0.2543367)--	(0.6438,0.2543367)--	(0.6568,0.2543367)--	(0.6698,0.2543367)--	(0.6828,0.2543367)--	(0.6958,0.2543367)--	(0.7088,0.2543367)--	(0.7218,0.2543367)--	(0.7348,0.2543367)--	(0.7478,0.2543367)--	(0.7608,0.2543367)--	(0.7738,0.2543367)--	(0.7868,0.2543367)--	(0.7998,0.2543367);

\draw[smooth,blue,dashed, thick](-0.5,0.2225)--	(-0.4872,0.2225)--	(-0.4742,0.2225)--	(-0.4612,0.2225)--	(-0.4482,0.2225)--	(-0.4352,0.2225)--	(-0.4222,0.2225)--	(-0.4092,0.2225)--	(-0.3962,0.2225)--	(-0.3832,0.2225)--	(-0.3702,0.2225)--	(-0.3572,0.2225)--	(-0.3442,0.2225)--	(-0.3312,0.2225)--	(-0.3182,0.2225)--	(-0.3052,0.2225)--	(-0.2922,0.2225)--	(-0.2792,0.2225)--	(-0.2662,0.2225)--	(-0.2532,0.2225)--	(-0.2402,0.2225)--	(-0.2272,0.2225)--	(-0.2142,0.2225)--	(-0.2012,0.2225)--	(-0.1882,0.2225)--	(-0.1752,0.2225)--	(-0.1622,0.2225)--	(-0.1492,0.2225)--	(-0.1362,0.2225)--	(-0.1232,0.2225)--	(-0.1102,0.2225)--	(-0.0972,0.2225)--	(-0.0842,0.2225)--	(-0.0712,0.2225)--	(-0.0582,0.2225)--	(-0.0452,0.2225)--	(-0.0322,0.226369)--	(-0.0192,0.2342003)--	(-0.0062,0.2421029)--	(0.0068,0.2500767)--	(0.0198,0.2581218)--	(0.0328,0.2718316)--	(0.0458,0.3453823)--	(0.0588,0.3727462)--	(0.0718,0.3436113)--	(0.0848,0.3192716)--	(0.0978,0.3270288)--	(0.1108,0.3348436)--	(0.1238,0.3427158)--	(0.1368,0.3506456)--	(0.1498,0.3586329)--	(0.1628,0.3494664)--	(0.1758,0.3401111)--	(0.1888,0.3308361)--	(0.2018,0.3216415)--	(0.2148,0.3125271)--	(0.2278,0.3034931)--	(0.2408,0.2945395)--	(0.2538,0.2856661)--	(0.2668,0.2768731)--	(0.2798,0.2719889)--	(0.2928,0.2716062)--	(0.3058,0.275968)--	(0.3188,0.280371)--	(0.3318,0.2847943)--	(0.3448,0.2892378)--	(0.3578,0.2937017)--	(0.3708,0.2981857)--	(0.3838,0.3026901)--	(0.3968,0.3082468)--	(0.4098,0.3162217)--	(0.4228,0.3242587)--	(0.4358,0.3323579)--	(0.4488,0.340519)--	(0.4618,0.3338599)--	(0.4748,0.3257493)--	(0.4878,0.3177008)--	(0.5008,0.3097144)--	(0.5138,0.3017901)--	(0.5268,0.2939278)--	(0.5398,0.2861277)--	(0.5528,0.2783896)--	(0.5658,0.2707136)--	(0.5788,0.2630997)--	(0.5918,0.2555479)--	(0.6048,0.2480581)--	(0.6178,0.2406304)--	(0.6308,0.2332649)--	(0.6438,0.2259613)--	(0.6568,0.2225)--	(0.6698,0.2225)--	(0.6828,0.2225)--	(0.6958,0.2225)--	(0.7088,0.2225)--	(0.7218,0.2225)--	(0.7348,0.2225)--	(0.7478,0.2225)--	(0.7608,0.2225)--	(0.7738,0.2225)--	(0.7868,0.2225)--	(0.7998,0.2225);

%
%
     
    \begin{customlegend}
    [legend entries={ $c_1(\mathbf{a})$,$c_2(\mathbf{a})$,$c_3(\mathbf{a})$,$c_4(\mathbf{a})$},
    legend style={at={(0.65,1.05)}}]   
    \addlegendimage{red,thick}     
    \addlegendimage{blue,dotted,thick}    
    \addlegendimage{blue,dashdotted,thick}  
    \addlegendimage{blue,dashed,thick}  
    \end{customlegend}
\end{tikzpicture}
\caption{An illustration of the effect on the cost of changing the arrival time of a single user while keeping the arrival times of the other three users constant. The horizontal axis is the arrival time of user $1$. The vertical axis shows the cost of all users. The parameter values of the example are: $\alpha=1.5$, $\beta=5$, and $d_i^*=0.5$ for $i=1,\ldots,4$.}
\label{fig:cost_arrival}
\end{figure}

In \cite{RN2015} it was shown that in the socially optimal arrival schedule the order of the users is in accordance with their departure times. For small values of $N$ ($\leq 15$) the exact socially optimal arrival scheduled can be computed using an exhaustive algorithm. For the general case, a heuristic search algorithm was suggested and shown to be efficient for larger values of $N$ ($\leq 500$). The key underlying property which makes scheduling difficult using these travel dynamics is the fact that the departure and cost functions behave differently given the combined order permutation of the arrival and departure times. From a combinatorial perspective, the number of possible permutations is exponential with the number of users. Unsurprisingly, this property will make the game of arrivals, defined in the next section, computationally difficult to solve as well. 

\section{Ordered arrival game}\label{sec:seq_game}
The game of arrivals is defined as follows: every user $i=1,\ldots,N$ chooses an arrival time $a_i\in[a_{i-1},\infty)$, where $a_0:=-\infty$. Let $\mathbf{a}_{-i}$ denote the arrival times vector of all users excluding $i$. The set of best response actions of user $i\in\mathcal{N}$ to the arrival times of all other users is
\[
\mathcal{BR}_i(\mathbf{a}_{-i}):=\argmin_{a\in[a_{i-1},\infty)}\{c_i(a,\mathbf{a}_{-i})\}.
\]
Note that $\mathcal{BR}$ is a set that may contain multiple minimizers, as the cost function is not convex.

We consider two possible game formulations. The first is a simultaneous game (Cournot) in which all users decide at the same time, without inspecting the arrival time of previous users. In this case we impose the order of arrivals on the game in the following manner: if user $i$ selects time $a_i<a_{j}$ for some $i>j$ then his effective arrival time will be $a_i=a_{j}$. We limit our analysis to pure strategies throughout this work, although mixed strategy equilibria are possible and are likely to exist as well.

\begin{definition}\label{def:CNE}
A pure strategy Cournot Nash equilibrium (CNE) is a vector of ordered arrivals, $\mathbf{a}\in\mathbbm{R}^N$, such that $a_i\in\mathcal{BR}(\mathbf{a}_{-i}),\ \forall i\in\mathcal{N}$.
\end{definition}

An alternative, and perhaps more natural, formulation for this model is a sequential game (Stackelberg), in which users make their decisions after observing the arrival times of the previous users. In this setting the action space of user $i\in\mathcal{N}$ is a function, $b_i(a_0,\ldots,a_{i-1}):\mathbbm{R}^{i-1}:\rightarrow\mathbbm{R}$, specifying his arrival time given the arrival time of all preceding users. The definition of the best response action is therefore not sufficient to define the equilibrium. We provide a definition using a constructive backward induction method.
 
\begin{definition}\label{def:SPNE}
A pure strategy Subgame Perfect Nash equilibrium (SPNE) is a collection of functions, $\{b_i:\mathbbm{R}^{i-1}:\rightarrow\mathbbm{R}\}$, that satisfies the backward induction:
\begin{itemize}
\item The initial condition is the function $b_N:\mathbbm{R}^{N-1}\rightarrow \mathbbm{R}$, defined by choosing an arbitrary element $b_N\in\mathcal{BR}(a_1,\ldots,a_{N-1})$ for any $\mathbf{a}_{-N}$.
\item For $i<N$,
\[
b_{i}(a_0,\ldots,a_{i-1})\in\argmin_{a\in[a_{i-1},\infty)}\{c_i(a_1,\ldots,a_{i-1},a,B_i(a))\},
\]
where
\[
B_i(a)=\left(b_{i+1}(a_1,\ldots,a_{i-1},a),\ldots,b_N\left(a_1,\ldots,a,\ldots,b_{N-1}(a_1,\ldots,a,\ldots)\right)\right),
\] 
is the collection of best response functions of users $j>i$. If the minimizer at any step is not unique then a minimizing element can be arbitrarily selected.
\end{itemize}
\end{definition}

Unfortunately, this method is only tractable for small values of $N$, as we will show in the next sections. The CNE is a more coarse equilibrium solution, as it is not subgame perfect. Nevertheless, it is a Nash equilibrium in both games that represents a stable schedule of arrivals in a sense that no single user has incentive to deviate. Take care however, and note that a CNE is not necessarily an equilibrium path of any SPNE. This is due to the fact that a CNE may include non-credible threats; the strategy instructs a user to play an action that is sub-optimal for some actions of the preceding users, but because all play a fixed action these situations are not realised. In the next section we shall show this can happen for a non-degenerate range of parameters in a two-user example.

Computing a CNE is a more modest goal, but still computationally challenging for games with a large number of users. This will lead us to suggest a heuristic method to find such equilibrium paths in Section \ref{sec:OBRA}. Numerical analysis using the above method will be presented in Section \ref{sec:numerical}, along with a comparison to the socially optimal solution of \cite{RN2015}. Before proceeding to the general analysis for any number of users we will commence by explicitly solving the two-user game, which provides some important insight about the model dynamics and difficulties. 

\section{Two-user game}\label{sec:example_N2}
Suppose that there are only two users arriving to the system. We assume that they both have the same desired departure time and select it to be zero without loss of generality, $d_1^*=d_2^*=0$. We arbitrarily set user $1$ to be the leader. We first analyse the SPNE of the sequential game, with the results summarized in Proposition \ref{prop:N2_SPNE}. The proof explicitly constructs all possible equilibria using a backward induction argument. This will be followed by the computing the CNE of the simultaneous game, which is fully characterised in Proposition \ref{prop:N2_CNE}. 

From equation \eqref{eq:d_recursive} we have that for $a_1\leq a_2$ the departure times are given by:
\[
d_1(a_1,a_2)=
\left\{
	\begin{array}{ll}
		a_1+\frac{1}{\beta} & \mbox{, } a_1\leq a_2-\frac{1}{\beta}, \\
		\frac{1+\beta a_1-\alpha a_2}{\beta-\alpha} & \mbox{, } a_2-\frac{1}{\beta}\leq a_1\leq a_2,
	\end{array}
\right.
\]
and
\[
d_2(a_1,a_2)=
\left\{
	\begin{array}{ll}
		a_2+\frac{1}{\beta} & \mbox{, } a_1\leq a_2-\frac{1}{\beta}, \\
		\frac{1+\beta a_2+\alpha(d_1(a_1,a_2)-a_2)}{\beta} & \mbox{, } a_2-\frac{1}{\beta}\leq a_1\leq a_2.
	\end{array}
\right.
\]

Recall that the free flow speed is $\beta$ and the travel time when a user is alone in the system is $\frac{1}{\beta}$. Hence, the condition $a_2>a_1+\frac{1}{\beta}$ can be interpreted as: the second arrival takes place after the first departure.

A unique feature of the two-user model is that both users spend the same amount of time in the system. This is because during the time that both users are in the system they travel at the same speed, $\beta-\alpha$, hence they travel the same distance during this time. For the remaining distance, which is identical for both, they travel at the free flow speed and again require the same time to complete the journey. Therefore,
\[
d_2-a_2=d_1-a_1=\left\{
	\begin{array}{ll}
		\frac{1}{\beta} & \mbox{, } a_1\leq a_2-\frac{1}{\beta}, \\
		\frac{1-\alpha(a_2-a_1)}{\beta-\alpha} & \mbox{, } a_2-\frac{1}{\beta}\leq a_1\leq a_2.
	\end{array}
\right.
\]

The cost of user $2$ given that user $1$ is arriving at $a_1$ is then given by \eqref{eq:costUserShort}:
\[
c_2(a_1,a_2)=
\left\{
	\begin{array}{ll}
		\left(\frac{1+\beta a_2}{\beta}\right)^2+\frac{\gamma}{\beta} & \mbox{, } a_2\geq a_1+\frac{1}{\beta}, \\
		\left(\frac{1+\beta a_2+\alpha(d_1(a_1,a_2)-a_2)}{\beta}\right)^2+\gamma\frac{1-\alpha(a_2-a_1)}{\beta-\alpha} & \mbox{, } a_1\leq a_2\leq a_1+\frac{1}{\beta}.
	\end{array}
\right.
\]

If $a_1\leq -\frac{2}{\beta}$ then user $2$ can obtain a minimal cost of $\frac{\gamma}{\beta}$ by arriving at exactly $-\frac{1}{\beta}$. For this best response to be unique we need to assume $\gamma>0$. Otherwise, if $\gamma=0$ and $-\frac{1}{\beta-\alpha}\leq a_1<-\frac{2}{\beta}$ then user $2$ can obtain a cost of zero by arriving at either $-\frac{1}{\beta}$ or at $-\frac{1+\alpha a_1}{\beta-2\alpha}$ which both yield $d_2=0$. For instance, $a_1=a_2=-\frac{1}{\beta-\alpha}$ guarantees that both travel at the slowest speed and depart together at zero.

If $a_1> -\frac{2}{\beta}$, the minimal cost satisfying $a_2\geq a_1+\frac{1}{\beta}$ is obtained by $a_2=a_1+\frac{1}{\beta}$, i.e., arriving immediately upon user $1$'s departure, for arriving later will increase the tardiness cost and not change the travel time. For the second region, $a_1\leq a_2\leq a_1+\frac{1}{\beta}$, the optimal arrival of user $2$ is given by solving the corresponding one dimensional constrained quadratic program. After plugging in $d_1(a_1,a_2)$ and applying simple algebra we have that the quadratic program that needs to be solved is
\[
\min_{a_1\leq x\leq a_1+\frac{1}{\beta}}\left\lbrace x^2U+xV+W\right\rbrace,
\]
where 
\[
\begin{split}
U &= \left(\frac{\beta-2\alpha}{\beta-\alpha}\right)^2 , \\
V &=  \frac{2(\beta-2\alpha)(1+\alpha a_1)-\gamma\alpha(\beta-\alpha)}{(\beta-\alpha)^2},
\end{split}
\]
and $W$ is a constant with no impact on the optimization. We denote the unconstrained minimum of the quadratic program by $X=-\frac{V}{2U}$. The best response of user $2$ to $a_1$ can then be explicitly stated as
\begin{equation}\label{eq:br_user2}
b_2(a_1)=\left\{
	\begin{array}{ll}
		-\frac{1}{\beta} & \mbox{, } a_1\leq -\frac{2}{\beta}, \\
		 x^*(a_1) & \mbox{, } a_1>-\frac{2}{\beta},
	\end{array}
\right.
\end{equation}
where
\[
x^*(a):=\left\{
	\begin{array}{ll}
		a & \mbox{, } a\geq X, \\
		X & \mbox{, } X-\frac{1}{\beta}<a<X, \\
		 a+\frac{1}{\beta} & \mbox{, } a\leq X-\frac{1}{\beta}.
	\end{array}
\right.
\]
Note that if $\alpha=\frac{\beta}{2}$ then $U=0$ (the cost function is linear) and we define $x^*(a):=a+\frac{1}{\beta}$. Furthermore, as explained above, if $\gamma=0$ then the best response is not unique for $a_1\in\left[-\frac{1}{\beta-\alpha},-\frac{2}{\beta}\right]$. Specifically, arriving at $-\frac{1+\alpha a_1}{\beta-2\alpha}$ yields a cost of zero, and so does arriving at $-\frac{1}{\beta}$ as prescribed by \eqref{eq:br_user2}.

The optimal choice of user $1$ if user $2$ is playing $b:\mathbbm{R}\rightarrow\mathbbm{R}$ is the collection of arrival times,
\[
b_1(a_0)=\argmin_{a\geq a_0}\{c_1(a,b_2(a))\},
\]
where $c_1$ is the cost function \eqref{eq:costUserShort} and $a_0=-\infty$ as in Definition \ref{def:SPNE}. We are now ready to fully characterise the SPNE.

\begin{proposition}\label{prop:N2_SPNE}
For $N=2$ and $\mathbf{d}^*=(0,0)$ the SPNE are characterised as follows.
\begin{enumerate}
\item In any case, the strategy of user $2$ is to arrive according to $b_2(a_1)$ for any arrival time of user $1$, $a_1$.
\item If $\gamma=0$ then 
\begin{enumerate}
\item if $\alpha<\frac{\beta}{2}$ the unique SPNE is $a_1=-\frac{1}{\beta-\alpha}$.
\item if $\alpha\geq\frac{\beta}{2}$ then there are two SPNE: $a_1=-\frac{1}{\beta-\alpha}$ and $a_1=-\frac{1}{\beta}$. The resulting equilibrium paths (realisations) are $\left(-\frac{1}{\beta-\alpha},-\frac{1}{\beta-\alpha}\right)$ and $\left(-\frac{1}{\beta},0\right)$, with respective equilibrium costs $(0,0)$ and $\left(0,\frac{1}{\beta^2}\right)$.
\end{enumerate}
\item If $\gamma>0$ then there is a unique SPNE such that
\begin{enumerate}
\item if $\alpha<\frac{\beta}{2}$ then $a_1=\argmin_{a>-\infty}\{c_1(a,b_2(a))\}\in(-\frac{2}{\beta},-\frac{1}{\beta}]$.
\item if $\alpha\geq \frac{\beta}{2}$ then $a_1=-\frac{1}{\beta}$. The resulting equilibrium path is $\left(-\frac{1}{\beta},0\right)$ with costs $\left(\frac{\gamma}{\beta},\frac{1}{\beta^2}+\frac{\gamma}{\beta}\right)$.
\end{enumerate}
\end{enumerate}
\end{proposition}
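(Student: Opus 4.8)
The plan is to follow the backward-induction recipe of Definition \ref{def:SPNE} literally: the follower's problem is already solved, so the proof reduces to substituting the follower's reaction into the leader's objective and minimising a single-variable, piecewise-quadratic function. Item~1 is then just a restatement of \eqref{eq:br_user2}, together with the observation recorded immediately below \eqref{eq:br_user2} that when $\gamma=0$ the best-response set is a doubleton on $\left[-\frac1{\beta-\alpha},-\frac2\beta\right)$, both $a_2=-\frac1\beta$ and $a_2=X(a_1)$ producing $d_2=0$. Everything else concerns the leader.

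Next I would compute the leader's induced departure time $\hat d_1(a_1):=d_1(a_1,b_2(a_1))$ by feeding each branch of $b_2$ into the two-user dynamics stated before the proposition. Three affine pieces appear: on a no-overlap branch ($a_2=-\frac1\beta$ or $a_2=a_1+\frac1\beta$) one gets $\hat d_1=a_1+\frac1\beta$; on the interior branch $a_2=X(a_1)$ one gets, after simplification, $\hat d_1=\frac{1+(\beta-\alpha)a_1}{\beta-2\alpha}$; and on the full-overlap branch $a_2=a_1$ one gets $\hat d_1=a_1+\frac1{\beta-\alpha}$. The two facts that drive everything are that the first piece vanishes exactly at $a_1=-\frac1\beta$ while the latter two vanish exactly at $a_1=-\frac1{\beta-\alpha}$, and that the ordering of the break-points $-\frac1{\beta-\alpha}$, $-\frac2\beta$, $-\frac1\beta$ — equivalently the sign of $\beta-2\alpha$ — decides which branch is active near each candidate. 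This is precisely where the dichotomy $\alpha\lessgtr\frac\beta2$ enters, since the follower's non-uniqueness interval $\left[-\frac1{\beta-\alpha},-\frac2\beta\right)$ is non-empty exactly when $\alpha>\frac\beta2$.

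For $\gamma=0$ the leader minimises $\hat d_1(a_1)^2$, so I would simply enumerate the zeros of $\hat d_1$. When $\alpha<\frac\beta2$ the only admissible zero is $a_1=-\frac1{\beta-\alpha}$ (at $a_1=-\frac1\beta$ the follower overlaps, forcing $\hat d_1>0$), giving the unique equilibrium; when $\alpha\ge\frac\beta2$ both $a_1=-\frac1{\beta-\alpha}$ (realised through the overlap member of the tied follower response) and $a_1=-\frac1\beta$ (follower arriving at $0$) drive $\hat d_1$ to zero, yielding the two stated paths and costs, while strict monotonicity of $\hat d_1$ on each affine piece rules out any further minimiser. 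For $\gamma>0$ the follower response is single-valued and the travel-time term gives the clean lower bound $c_1\ge\frac\gamma\beta$, with equality iff $\hat d_1=0$ and the leader travels alone; for $\alpha\ge\frac\beta2$ this is met only at $a_1=-\frac1\beta$, whereas for $\alpha<\frac\beta2$ I would localise the minimiser to $\left(-\frac2\beta,-\frac1\beta\right]$ by showing the induced cost is monotone on each of the two outer tails, uniqueness then following from strict convexity of the active quadratic piece (no closed form being available).

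The main obstacle I anticipate is the bookkeeping of the composition $c_1(a_1,b_2(a_1))$: each of the three dynamic branches must be matched with the correct branch of $b_2$, this matching flips with the sign of $\beta-2\alpha$, and — most delicately — the second equilibrium in the $\alpha\ge\frac\beta2$, $\gamma=0$ case exists \emph{only} because the follower's best response is a genuine correspondence there. The argument must therefore track which tie-breaking selections are consistent with a subgame-perfect profile, rather than treating $b_2$ as a single function, and it is this point (not any hard computation) that makes the multiplicity statement delicate.
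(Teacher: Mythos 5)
Your proposal takes essentially the same route as the paper's proof: backward induction with the precomputed follower response \eqref{eq:br_user2}, substitution into the leader's cost, and a case split on $\gamma=0$ versus $\gamma>0$ and $\alpha\lessgtr\frac{\beta}{2}$ via the resulting one-variable piecewise-quadratic problem; your explicit induced-departure function $\hat d_1(a_1)=d_1(a_1,b_2(a_1))$ with its per-branch zeros (which I checked: the interior-branch formula $\frac{1+(\beta-\alpha)a_1}{\beta-2\alpha}$ is correct for $\gamma=0$, the only case where you use it) is just cleaner bookkeeping of the computations the paper performs, and your insistence on tracking the tie-breaking selections of the follower's best-response correspondence at $\gamma=0$ is precisely how the paper realises the second equilibrium path $\left(-\frac{1}{\beta-\alpha},-\frac{1}{\beta-\alpha}\right)$. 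The one soft spot --- inferring uniqueness for $\gamma>0$, $\alpha<\frac{\beta}{2}$ from strict convexity of the active piece, which strictly requires verifying that the up-to-three quadratic pieces on $\left(-\frac{2}{\beta},-\frac{1}{\beta}\right]$ glue into a unimodal function --- is no less complete than the paper's own one-line argument at the same step.
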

\begin{proof}
We have already established that $b_2(a_1)$ is the best response strategy of user $2$, hence this is the strategy played in any equilibrium. Computing the optimal response of user $1$ to this strategy requires solving two quadratic programs: the first is a trivial one for $a_1\leq -\frac{2}{\beta}$, and the second one gives $x^*$. 

If $\alpha\geq\frac{\beta}{2}$ then user $1$ can make sure user $2$ arrives at $0$, resulting in the minimal cost of $\frac{\gamma}{\beta}$ for user $1$. This is achieved by arriving at $a_1=-\frac{1}{\beta}$, which satisfies $a_1\leq X-\frac{1}{\beta}$. If $\gamma=0$ then arriving at $-\frac{1}{\beta-\alpha}$ also achieves the minimal cost for user 1, which is zero in this case. Therefore, if $\gamma=0$ both $a_1=-\frac{1}{\beta}$ and $a_1=-\frac{1}{\beta-\alpha}$ are SPNE (together with $b_2(a)$).  

If $\alpha<\frac{\beta}{2}$ then there is a unique minimizer, $-\frac{2}{\beta}<a\leq-\frac{1}{\beta}$, to $c_1(a,b_2(a))$. This is seen by considering the unconstrained quadratic program solution for user 2,
\[
X=-\frac{2(\beta-2\alpha)(1+\alpha a)-\gamma\alpha(\beta-\alpha)}{2(\beta-2\alpha)^2}.
\]
If $\gamma=0$ then $X<0$ for $a=-\frac{1}{\beta}$, implying that $b_2\left(-\frac{1}{\beta}\right)<0$. Thus, we have that $a_1=-\frac{1}{\beta-\alpha}$ is the unique SPNE in this case. If $\gamma>0$ then $X$ is a monotone decreasing function of $a$, hence $b_2(a)$ is monotone decreasing as well and $c_1(a,b_2(a))$ has a unique minimizer.  \qed
\end{proof}

In Figure \ref{fig:arrivals_N2} we illustrate the realised arrival times of the two users according to an SPNE for varying values of the travel time penalty, $\gamma$, and the linear slowdown coefficient, $\alpha$.

\begin{figure}[H]
\begin{subfigure}{.49\linewidth}
\begin{tikzpicture}[xscale=0.7,yscale=1.1]
  \def\xmin{0}
  \def\xmax{8.1}
  \def\ymin{-2}
  \def\ymax{0.1}
    \draw[->] (\xmin,\ymin) -- (\xmax,\ymin) node[right] {$\gamma$} ;
    \draw[->] (0,\ymin) -- (0,\ymax) node[above] {$a_i$} ;
    \foreach \x in {0,2,4,6,8}
    \node at (\x,\ymin) [below] {\x};
    \foreach \y in {-2,-1,0}
    \node at (0,\y) [left] {\y};

	\draw[smooth,blue,thick] (0,-1.1112)--	(0.5,-1.1322)--	(1,-1.1532)--	(1.5,-1.1752)--	(2,-1.1962)--	(2.5,-1.2172)--	(3,-1.2382)--	(3.5,-1.2592)--	(4,-1.2812)--	(4.5,-1.3022)--	(5,-1.3232)--	(5.5,-1.3442)--	(6,-1.3662)--	(6.5,-1.3872)--	(7,-1.4082)--	(7.5,-1.4292)--	(8,-1.4502);

	\draw[smooth,red,dashdotted, thick] (0,-1.1111)--	(0.5,-1.0733)--	(1,-1.0355)--	(1.5,-0.9976)--	(2,-0.9598)--	(2.5,-0.9221)--	(3,-0.8843)--	(3.5,-0.8465)--	(4,-0.8086)--	(4.5,-0.7708)--	(5,-0.733)--	(5.5,-0.6953)--	(6,-0.6573)--	(6.5,-0.6196)--	(7,-0.5818)--	(7.5,-0.544)--	(8,-0.5062);
\end{tikzpicture}
\caption*{$\alpha=0.1$}
\end{subfigure}
\begin{subfigure}{.49\linewidth}
\begin{tikzpicture}[xscale=0.7,yscale=1.1]
  \def\xmin{0}
  \def\xmax{8.1}
  \def\ymin{-2}
  \def\ymax{0.1}
    \draw[->] (\xmin,\ymin) -- (\xmax,\ymin) node[right] {$\gamma$} ;
    \draw[->] (0,\ymin) -- (0,\ymax) node[above] {$a_i$} ;
    \foreach \x in {0,2,4,6,8}
    \node at (\x,\ymin) [below] {\x};
    \foreach \y in {-2,-1,0}
    \node at (0,\y) [left] {\y};

	\draw[smooth,blue,thick] (0,-1.25)--	(0.5,-1.276)--	(1,-1.302)--	(1.5,-1.328)--	(2,-1.354)--	(2.5,-1.38)--	(3,-1.406)--	(3.5,-1.416)--	(4,-1.333)--	(4.5,-1.25)--	(5,-1.167)--	(5.5,-1.084)--	(6,-1)--	(6.5,-1)--	(7,-1)--	(7.5,-1)--	(8,-1);
	
	\draw[smooth,red,dashdotted, thick] (0,-1.25)--	(0.5,-1.1302)--	(1,-1.0104)--	(1.5,-0.8907)--	(2,-0.7709)--	(2.5,-0.6511)--	(3,-0.5313)--	(3.5,-0.4169)--	(4,-0.3334)--	(4.5,-0.25)--	(5,-0.167)--	(5.5,-0.084)--	(6,0)--	(6.5,0)--	(7,0)--	(7.5,0)--	(8,0);
\end{tikzpicture}
\caption*{$\alpha=0.2$}
\end{subfigure}

\begin{subfigure}{.49\linewidth}
\begin{tikzpicture}[xscale=0.7,yscale=1.1]
  \def\xmin{0}
  \def\xmax{8.1}
  \def\ymin{-2}
  \def\ymax{0.1}
    \draw[->] (\xmin,\ymin) -- (\xmax,\ymin) node[right] {$\gamma$} ;
    \draw[->] (0,\ymin) -- (0,\ymax) node[above] {$a_i$} ;
    \foreach \x in {0,2,4,6,8}
    \node at (\x,\ymin) [below] {\x};
    \foreach \y in {-2,-1,0}
    \node at (0,\y) [left] {\y};

	\draw[smooth,blue,thick] (0,-1.6663)--	(0.5,-1.3893)--	(1,-1.0003)--	(1.5,-1.0003)--	(2,-1.0003)--	(2.5,-1.0003)--	(3,-1.0003)--	(3.5,-1.0003)--	(4,-1.0003)--	(4.5,-1.0003)--	(5,-1.0003)--	(5.5,-1.0003)--	(6,-1.0003)--	(6.5,-1.0003)--	(7,-1.0003)--	(7.5,-1.0003)--	(8,-1.0003);
	
	\draw[smooth,red,dashdotted, thick] (0,-1.6663)--	(0.5,-0.7213)--	(1,-0.0003)--	(1.5,-0.0003)--	(2,-0.0003)--	(2.5,-0.0003)--	(3,-0.0003)--	(3.5,-0.0003)--	(4,-0.0003)--	(4.5,-0.0003)--	(5,-0.0003)--	(5.5,-0.0003)--	(6,-0.0003)--	(6.5,-0.0003)--	(7,-0.0003)--	(7.5,-0.0003)--	(8,-0.0003);

\end{tikzpicture}
\caption*{$\alpha=0.4$}
\end{subfigure}
\begin{subfigure}{.49\linewidth}
\begin{tikzpicture}[xscale=0.7,yscale=1.1]
  \def\xmin{0}
  \def\xmax{8.1}
  \def\ymin{-2}
  \def\ymax{0.1}
    \draw[->] (\xmin,\ymin) -- (\xmax,\ymin) node[right] {$\gamma$} ;
    \draw[->] (0,\ymin) -- (0,\ymax) node[above] {$a_i$} ;
    \foreach \x in {0,2,4,6,8}
    \node at (\x,\ymin) [below] {\x};
    \foreach \y in {-2,-1,0}
    \node at (0,\y) [left] {\y};

	\draw[smooth,blue,thick] (0,-1)--	(8,-1);
	
	\draw[smooth,red,dashdotted,thick] (0,0)--	(8,0);
\end{tikzpicture}
\caption*{$\alpha=0.6$}
\end{subfigure}

\centering
\begin{subfigure}[b]{.2\linewidth}
\begin{tikzpicture}[scale=10]
    \begin{customlegend}
    [legend entries={User $1$,User $2$},
    legend style={at={(1.4,0.5)}}]
    \addlegendimage{blue,thick}
    \addlegendimage{red,thick,dashdotted}
    \end{customlegend}
\end{tikzpicture}
\end{subfigure}
\caption{SPNE arrival times density for $\gamma\in[0,8]$ with $\beta=1$, $\mathbf{d}^*=(0,0)$, and different values of $\alpha$.}
\label{fig:arrivals_N2}
\end{figure}
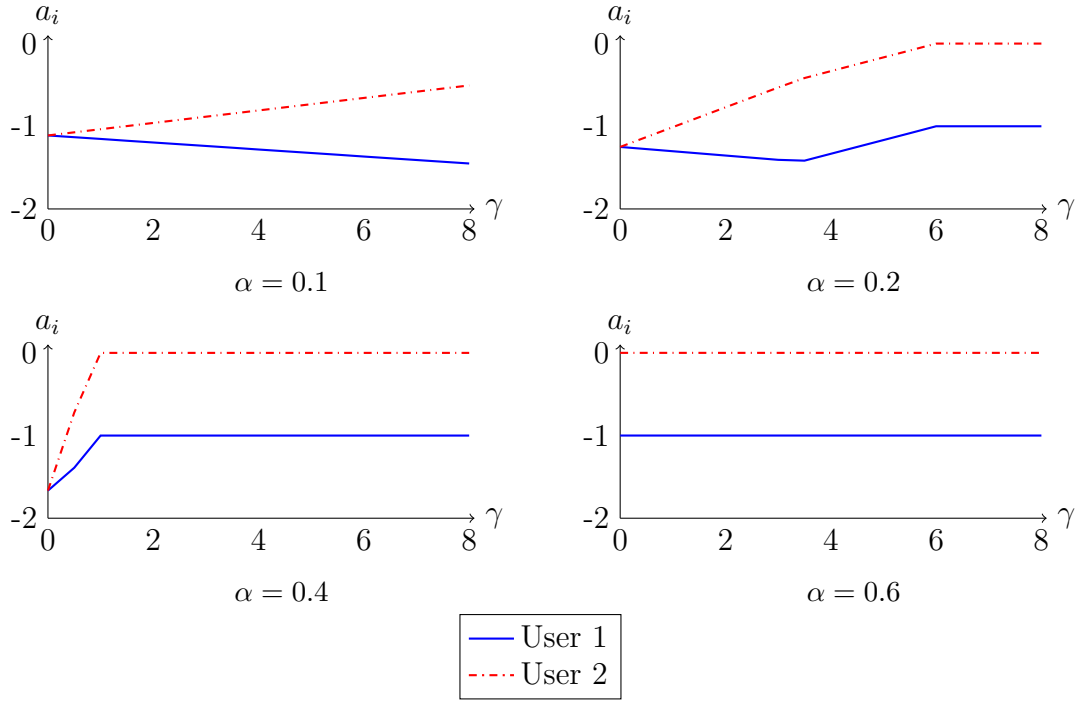

In order to find the Cournot Nash Equilibrium points, the best response of user $1$ for any arrival time of user $2$ needs to be computed. Note that $b_2(a_1)$ as defined above remains the best response for user $2$ given any action $a_1$ of user $1$. As before, by considering the model dynamics we get that the best response is given by solving the respective quadratic program for each region of the cost function \eqref{eq:costUserShort},
\[
c_1(a_1,a_2)=
\left\{
	\begin{array}{ll}
		\left(\frac{1+\beta a_1}{\beta}\right)^2+\frac{\gamma}{\beta} & \mbox{, } a_1\leq a_2-\frac{1}{\beta}, \\
		\left(\frac{1+\beta a_1-\alpha a_2}{\beta-\alpha}\right)^2+\gamma\frac{1-\alpha(a_2-a_1)}{\beta-\alpha} & \mbox{, } a_2-\frac{1}{\beta}\leq a_1\leq a_2.
	\end{array}
\right.
\]
The unconstrained minimum of the quadratic cost in the second region is then
\[
Y=-\frac{2\beta(1-\alpha a_2)+\gamma\alpha(\beta-\alpha)}{2\beta^2}.
\]
This yields the best response function for user 1,
\begin{equation}\label{eq:br_user1}
b_1(a_2)=\left\{
	\begin{array}{ll}
		-\frac{1}{\beta} & \mbox{, } a_2\geq 0, \\
		 y^*(a_2) & \mbox{, } a_2<0,
	\end{array}
\right.
\end{equation}
where
\[
y^*(a):=\left\{
	\begin{array}{ll}
		a & \mbox{, } a\leq Y, \\
		Y & \mbox{, } Y<a<Y+\frac{1}{\beta}, \\
		a-\frac{1}{\beta} & \mbox{, } a\geq Y+\frac{1}{\beta}.
	\end{array}
\right. 
\]

A CNE in this example is any pair of arrival times $a_1\leq a_2$ such that $a_2\in b_2(a_1)$ and $a_1\in b_1(a_2)$. Note that the equilibrium path of an SPNE is not necessarily a CNE, as user $1$ now minimizes his cost with respect to a single action of user $2$ which does not change with his own action. Using similar arguments to those used in the proof of Proposition \ref{prop:N2_SPNE} we can obtain the following characterisation of the CNE points.

\begin{proposition}\label{prop:N2_CNE}
For $N=2$ and $\mathbf{d}^*=(0,0)$ the CNE are characterised as follows.
\begin{enumerate}
\item If $\gamma=0$ then, 
\begin{enumerate}
\item if $\alpha<\frac{\beta}{2}$ the unique CNE is $\left(-\frac{1}{\beta-\alpha},-\frac{1}{\beta-\alpha}\right)$,
\item if $\alpha\geq\frac{\beta}{2}$ then there are two CNE: $\left(-\frac{1}{\beta-\alpha},-\frac{1}{\beta-\alpha}\right)$ and $\left(-\frac{1}{\beta},0\right)$.
\end{enumerate}
\item If $\gamma>0$ then,
\begin{enumerate}
\item if $\alpha<\frac{\beta}{2}$ then,
\begin{enumerate}
\item if $\gamma\leq \frac{\beta-2\alpha}{\alpha(\beta-\alpha)}$ then there is a unique CNE:
\[
a_1=-\frac{1}{\beta-\alpha}-\frac{\gamma\alpha\left(\beta^2-4\alpha\beta\right)}{2\beta^2(\beta-2\alpha)}, \quad
a_2=-\frac{1}{\beta-\alpha}+\frac{\gamma\alpha\left(\beta+2\alpha\right)}{2\beta(\beta-2\alpha)},
\]
\item if $\gamma> \frac{\beta-2\alpha}{\alpha(\beta-\alpha)}$ then any point $(a_1,a_2)$ satisfying 
\[
a_1=a_2-\frac{1}{\beta},\quad a_2\in\left[-\min\left\lbrace\frac{\gamma\alpha}{2\beta},\frac{1}{\beta}\right\rbrace,-\max\left\lbrace\frac{1}{\beta}-\frac{\gamma\alpha}{2(\beta-2\alpha)},0\right\rbrace\right]
\]
is a CNE,
\end{enumerate} 
\item if $\alpha\geq \frac{\beta}{2}$ then any point $(a_1,a_2)$ satisfying 
\[
a_1=a_2-\frac{1}{\beta}, \quad a_2\in\left[-\min\left\lbrace\frac{\gamma\alpha}{2\beta},\frac{1}{\beta}\right\rbrace,0\right],
\]
is a CNE.
\end{enumerate}
\end{enumerate}
\end{proposition}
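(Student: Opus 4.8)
The plan is to characterise the CNE as the mutual fixed points of the two best-response maps: a pair $(a_1,a_2)$ with $a_1\le a_2$ is a CNE precisely when $a_2\in b_2(a_1)$ and $a_1\in b_1(a_2)$, with $b_2$ given by \eqref{eq:br_user2} and $b_1$ by \eqref{eq:br_user1}. Since both maps are piecewise-linear, I would organise the argument by which branch of $x^*$ and $y^*$ is active, distinguishing the \emph{overlapping} regime (the interior branch, with strict overlap $a_1< a_2< a_1+\frac1\beta$) from the \emph{touching} regime $a_2=a_1+\frac1\beta$, in which each user travels alone. This partition matches the case split of the statement.

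I would first dispose of $\gamma=0$, where the cost reduces to $d_i^2$ and each user only wants $d_i=0$. Using the equal-sojourn identity $d_2-a_2=d_1-a_1$ established for $N=2$, any profile in which both users attain the global optimum $d_i=0$ must have $a_1=a_2$, and travelling together at speed $\beta-\alpha$ then forces $a_1=a_2=-\frac1{\beta-\alpha}$; one checks this is always a fixed point. The only other candidate is the separated profile $\left(-\frac1\beta,0\right)$: evaluating $b_2$ at $a_1=-\frac1\beta$ through $x^*$ shows user $2$'s best response is to separate exactly when $\alpha\ge\frac\beta2$ and to overlap when $\alpha<\frac\beta2$, which yields the single CNE for $\alpha<\frac\beta2$ and the two CNE for $\alpha\ge\frac\beta2$.

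For $\gamma>0$ and $\alpha<\frac\beta2$ with small $\gamma$ I would seek an overlapping CNE at the intersection of the two interior branches, i.e. solve the linear system $a_2=X(a_1)$, $a_1=Y(a_2)$ for the unconstrained minimisers $X,Y$; the resulting $2\times2$ system gives the explicit interior pair, and the remaining task is to verify it actually lies in the overlap region with both minimisers interior, which reduces precisely to $\gamma\le\frac{\beta-2\alpha}{\alpha(\beta-\alpha)}$. Above this threshold, and for all $\gamma>0$ when $\alpha\ge\frac\beta2$, the equilibria migrate to the touching line $a_1=a_2-\frac1\beta$, and I would pin down the admissible range of $a_2$ by imposing both best responses at once. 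User $1$ plays the separating branch $y^*(a_2)=a_2-\frac1\beta$ iff $a_2\ge Y(a_2)+\frac1\beta$, which simplifies to $a_2\ge-\frac{\gamma\alpha}{2\beta}$, together with $a_2\le 0$; user $2$ plays $x^*(a_1)=a_1+\frac1\beta$ iff $a_2\le X(a_1)$, which for $\alpha<\frac\beta2$ simplifies to $a_2\le\frac{\gamma\alpha}{2(\beta-2\alpha)}-\frac1\beta$ and for $\alpha\ge\frac\beta2$ turns out to be non-binding; the branch-validity requirement $a_1>-\frac2\beta$ supplies the bound $a_2\ge-\frac1\beta$. Intersecting these inequalities produces the stated intervals, written with the $\min$ and $\max$ to encode which constraint binds.

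The main obstacle is this touching-line analysis: one must correctly identify the active branch of each piecewise-linear best response on the line $a_2=a_1+\frac1\beta$, translate each branch condition into an explicit inequality in $a_2$, and handle the sign change at $\alpha=\frac\beta2$, where $\beta-2\alpha$ changes sign so that user $2$'s separation constraint ceases to bind, collapsing the two sub-cases into the single interval for $\alpha\ge\frac\beta2$. A final consistency check I would perform is that at $\gamma=\frac{\beta-2\alpha}{\alpha(\beta-\alpha)}$ the touching interval degenerates to exactly the interior point found above, so the overlapping and touching regimes agree at the threshold and the characterisation is seamless.
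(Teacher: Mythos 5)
Your proposal is correct and follows essentially the same route as the paper's proof: characterising CNE as mutual fixed points of the piecewise-linear best responses $b_1,b_2$, solving the interior system $a_1=Y,\ a_2=X$ for small $\gamma$ with the threshold $\gamma\leq\frac{\beta-2\alpha}{\alpha(\beta-\alpha)}$ emerging from the interiority check, and deriving the touching-line intervals from the same branch-validity inequalities (your bounds $a_2\geq-\frac{\gamma\alpha}{2\beta}$, $a_2\leq\frac{\gamma\alpha}{2(\beta-2\alpha)}-\frac{1}{\beta}$, and $a_2\geq-\frac{1}{\beta}$ match the regions the paper reads off its explicit forms of $b_1$ and $b_2$). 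Your $\gamma=0$ treatment via the equal-sojourn identity and your consistency check that the touching interval degenerates to the interior point at the threshold are minor, correct variations on the paper's composition argument $a_2=b_2(b_1(a_2))$, not a different method.
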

\begin{proof}
Clearly, in equilibrium $a_1\leq-\frac{1}{\beta}$ and $-\frac{1}{\beta-\alpha}\leq a_2\leq 0$, otherwise trivial reductions can be made to at least one of the cost functions. Plugging $Y$ in the best response function \eqref{eq:br_user1} for the relevant range yields
\[
b_1(a_2)=\left\{
	\begin{array}{ll}
	\frac{\alpha}{\beta}a_2-\frac{1}{\beta}-\frac{\gamma\alpha(\beta-\alpha)}{2\beta^2} & \mbox{, } -\frac{1}{\beta-\alpha	}\leq a_2\leq -\frac{\gamma\alpha}{2\beta}, \\
	a_2-\frac{1}{\beta} & \mbox{, } -\frac{\gamma\alpha}{2\beta}< a_2\leq 0.
	\end{array}
\right.
\]
The best response function \eqref{eq:br_user2} of user $2$ depends on the values of $\gamma$, $\alpha$ and $\beta$ in a slightly more elaborate manner. Specifically, if $\alpha\geq\frac{\beta}{2}$ then
\[
b_2(a_1)=\left\{
	\begin{array}{ll}
		-\frac{1}{\beta} & \mbox{, } a_1<-\frac{1}{\beta-\alpha},\gamma=0, \\
	-\frac{1}{\beta} \ \mathrm{or} \ -\frac{1+\alpha a_1}{\beta-2\alpha} \ & \mbox{, } -\frac{1}{\beta-\alpha}\leq a_1< -\frac{2}{\beta},\gamma=0, \\
		a_1+\frac{1}{\beta} \ & \mbox{, } -\frac{2}{\beta} \leq a_1< -\frac{1}{\beta},\gamma=0, \\
	-\frac{1}{\beta} & \mbox{, } a_1\leq-\frac{2}{\beta},\gamma>0, \\
	a_1+\frac{1}{\beta} & \mbox{, } a_1>-\frac{2}{\beta},\gamma>0,
	\end{array}
\right.
\]
and if $\alpha<\frac{\beta}{2}$ then
\[
b_2(a_1)=\left\{
	\begin{array}{ll}
	-\frac{1}{\beta} & \mbox{, } a_1\leq-\frac{2}{\beta}, \\
	a_1+\frac{1}{\beta} & \mbox{, }-\frac{2}{\beta}<a_1\leq -\frac{2}{\beta}+\frac{\gamma\alpha}{2(\beta-2\alpha)} \\
		-\frac{\alpha}{\beta-2\alpha}a_1-\frac{1}{\beta-2\alpha}+\frac{\gamma\alpha(\beta-\alpha)}{2(\beta-2\alpha)^2} & \mbox{, } -\frac{2}{\beta}+\frac{\gamma\alpha}{2(\beta-2\alpha)}<a_1\leq -\frac{1}{\beta-\alpha}+\frac{\gamma\alpha}{2(\beta-2\alpha)}, \\
			a_1 & \mbox{, } -\frac{1}{\beta-\alpha}+\frac{\gamma\alpha}{2(\beta-2\alpha)}<a_1\leq-\frac{1}{\beta}.
	\end{array}
\right.
\]
Note that some the regions in the latter function may be empty, depending on the value of $\gamma$.
\begin{enumerate}
\item If $\gamma=0$ then $a_1=a_2=-\frac{1}{\beta-\alpha}$ is clearly an equilibrium because both users incur the minimal cost of zero. Furthermore, in this case,
\[
b_1(a_2)=\frac{\alpha}{\beta}a_2-\frac{1}{\beta},\quad -\frac{1}{\beta-\alpha}\leq a_2\leq 0,
\]
which further implies that $-\frac{1}{\beta-\alpha}\leq b_1(a_2)\leq-\frac{1}{\beta}$. 
\begin{enumerate}
\item If $\alpha<\frac{\beta}{2}$ then $\beta-2\alpha>0$ and $-\frac{2}{\beta}>-\frac{1}{\beta-\alpha}$, and thus $b_2(b_1(a_2))=b_1(a_2)$ for $-\frac{1}{\beta-\alpha}\leq b_1(a_2)\leq-\frac{1}{\beta}$. The only solution to 
\[
a_2=b_2(b_1(a_2))=b_1(a_2)=\frac{\alpha}{\beta}a_2-\frac{1}{\beta},
\]
is then given by $a_2=-\frac{1}{\beta-\alpha}$. 
\item If $\alpha\geq\frac{\beta}{2}$ then there are two solutions to $a_2=b_2(b_1(a_2))$: the first is in the second region of $b_2(a_1)$,
\[
a_2=-\frac{1+\alpha \left(\frac{\alpha}{\beta}a_2-\frac{1}{\beta}\right)}{\beta-2\alpha},
\]
yielding $a_1=a_2=-\frac{1}{\beta-\alpha}$, and the second is $a_2=a_1+\frac{1}{\beta}=0$ in the third region of $b_2(a_1)$.
\end{enumerate}
\item If $\gamma>0$ then $a_1=a_2$ cannot be an equilibrium. This is because $b_1(a_2)=a_2$ only in case $a_2=-\frac{1}{\beta-\alpha}$, which is never a best response for user $2$ if $\gamma>0$. Therefore, there are two possible types of equilibrium: $a_2=a_1+\frac{1}{\beta}$ (and $a_1=a_2-\frac{1}{\beta}$), or the simultaneous solution to the equations $a_1=Y$ and $a_2=X$.
\begin{enumerate}
\item If $\alpha<\frac{\beta}{2}$ then,
\begin{enumerate}
\item If $\gamma\leq \frac{\beta-2\alpha}{\alpha(\beta-\alpha)}$ then $a_2=a_1+\frac{1}{\beta}$ is a best response to $a_1=a_2-\frac{1}{\beta}$ if
\[
-\frac{2}{\beta}< a_1 \leq -\frac{2}{\beta}+\frac{\gamma\alpha}{2(\beta-2\alpha)} \ \Leftrightarrow \ -\frac{1}{\beta}< a_2 \leq -\frac{1}{\beta}+\frac{\gamma\alpha}{2(\beta-2\alpha)},
\]
but then $a_2\leq -\frac{\gamma\alpha}{2\beta}$ (from $\gamma\leq \frac{\beta-2\alpha}{\alpha(\beta-\alpha)}$), which implies that $a_1=a_2-\frac{1}{\beta}$ is not a best response. Hence the unique CNE is given by the interior point $(Y,X)$ as stated in the proposition.
\item If $\gamma> \frac{\beta-2\alpha}{\alpha(\beta-\alpha)}$ then there is no CNE in the interior. All pairs satisfying the condition in the proposition statement are such that best response of user 2 to $a_1=a_2-\frac{1}{\beta}$ is $a_2=a_1+\frac{1}{\beta}$, and vice versa. The interval for $a_2$ is obtained by considering the best response regions of $b_1(a_2)$ and $b_2(a_1)$.
\end{enumerate} 
\item If $\alpha\geq\frac{\beta}{2}$ then $b_2(a_1)$ never has an interior solution, i.e. given by $X$. Therefore all equilibrium are of the form $a_2=a_1+\frac{1}{\beta}$, and are given by the range specified in the proposition statement.\qed
\end{enumerate}
\end{enumerate} 
\end{proof}

\begin{remark}
The multiple equilibria, in fact a continuum of equilibria, arises in cases where interaction is too costly, except for the case $\gamma=0$ which yields a different type of non-unique solution. Specifically, the equilibrium is not unique when there is a steep slowdown in travel speed: $\alpha\geq \frac{\beta}{2}$, or when the travel time penalty $\gamma$ is very high. In all such equilibria the users travel alone at free flow speed.
\end{remark}

We computed the CNE arrival ranges for different parameter values, along with the realised arrival time corresponding to the SPNE. In Figure \ref{fig:arrivals_CNE_SPNE_1} we set $\alpha<\frac{\beta}{2}$, and in this case the CNE are unique and close to the SPE actions for small $\gamma$, whereas for larger values of $\gamma$ there is an expanding range of possible CNE such that both users travel alone. In Figure \ref{fig:arrivals_CNE_SPNE_2} we illustrate an example with $\alpha> \frac{\beta}{2}$. In this case the CNE is never unique and can be quite far from the SPNE actions for larger values of $\gamma$.

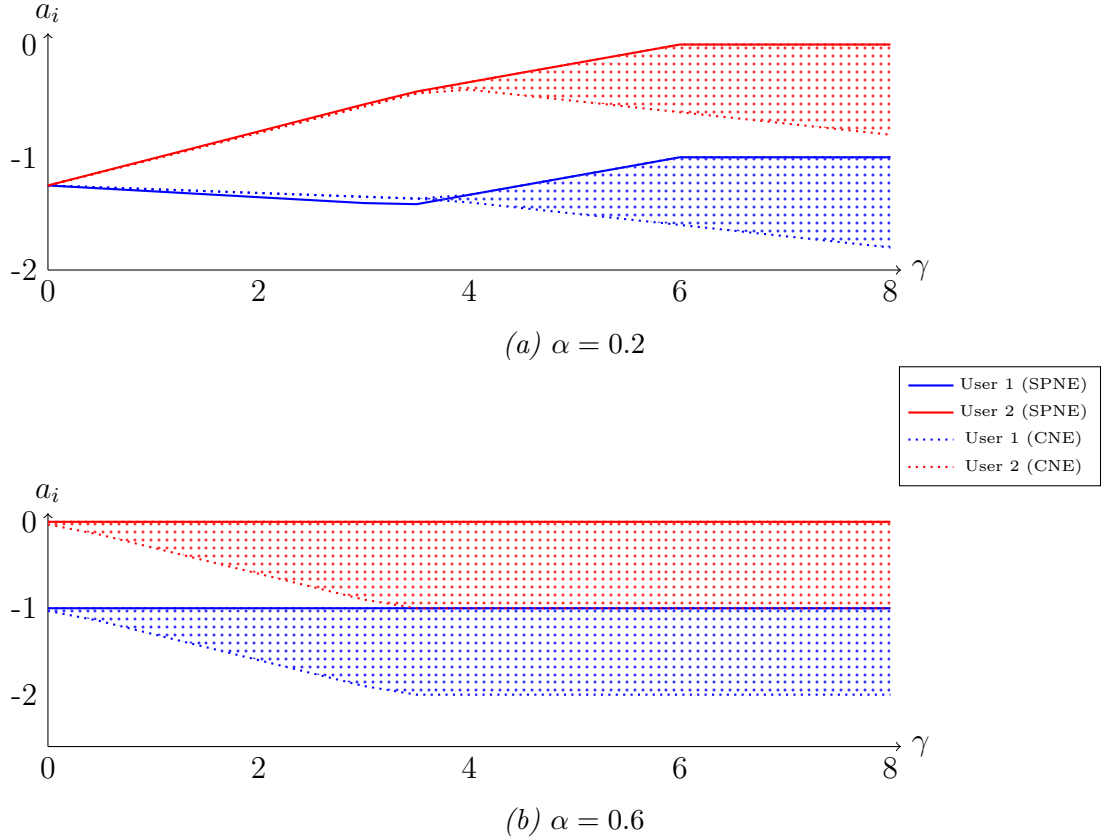
\begin{figure}[H]
\centering
\begin{subfigure}{\textwidth}
\begin{tikzpicture}[xscale=1.4,yscale=1.5]
  \def\xmin{0}
  \def\xmax{8.1}
  \def\ymin{-2}
  \def\ymax{0.1}
    \draw[->] (\xmin,\ymin) -- (\xmax,\ymin) node[right] {$\gamma$} ;
    \draw[->] (0,\ymin) -- (0,\ymax) node[above] {$a_i$} ;
    \foreach \x in {0,2,4,6,8}
    \node at (\x,\ymin) [below] {\x};
    \foreach \y in {-2,-1,0}
    \node at (0,\y) [left] {\y};

	\draw[smooth,blue,thick] (0,-1.25)--	(0.5,-1.276)--	(1,-1.302)--	(1.5,-1.328)--	(2,-1.354)--	(2.5,-1.38)--	(3,-1.406)--	(3.5,-1.416)--	(4,-1.333)--	(4.5,-1.25)--	(5,-1.167)--	(5.5,-1.084)--	(6,-1)--	(6.5,-1)--	(7,-1)--	(7.5,-1)--	(8,-1);
	
	\draw[smooth,red,thick] (0,-1.25)--	(0.5,-1.1302)--	(1,-1.0104)--	(1.5,-0.8907)--	(2,-0.7709)--	(2.5,-0.6511)--	(3,-0.5313)--	(3.5,-0.4169)--	(4,-0.3334)--	(4.5,-0.25)--	(5,-0.167)--	(5.5,-0.084)--	(6,0)--	(6.5,0)--	(7,0)--	(7.5,0)--	(8,0);
	
		\draw[smooth,blue,dotted,thick] (0,-1.25)--	(0.5,-1.266667)--	(1,-1.283333)--	(1.5,-1.3)--	(2,-1.316667)--	(2.5,-1.333333)--	(3,-1.35)--	(3.5,-1.366667)--	(4,-1.4)--	(4.5,-1.45)--	(5,-1.5)--	(5.5,-1.55)--	(6,-1.6)--	(6.5,-1.65)--	(7,-1.7)--	(7.5,-1.75)--	(8,-1.8);
		
		\draw[smooth,blue,dotted,thick] (0,-1.25)--	(0.5,-1.266667)--	(1,-1.283333)--	(1.5,-1.3)--	(2,-1.316667)--	(2.5,-1.333333)--	(3,-1.35)--	(3.5,-1.366667)--	(4,-1.333333)--	(4.5,-1.25)--	(5,-1.166667)--	(5.5,-1.083333)--	(6,-1)--	(6.5,-1)--	(7,-1)--	(7.5,-1)--	(8,-1);
		
		\fill[pattern color=blue!80, opacity=1,pattern=dots] (0,-1.25)--	(0.5,-1.266667)--	(1,-1.283333)--	(1.5,-1.3)--	(2,-1.316667)--	(2.5,-1.333333)--	(3,-1.35)--	(3.5,-1.366667)--	(4,-1.4)--	(4.5,-1.45)--	(5,-1.5)--	(5.5,-1.55)--	(6,-1.6)--	(6.5,-1.65)--	(7,-1.7)--	(7.5,-1.75)--	(8,-1.8)--	(8,-1)--	(7.5,-1)--	(7,-1)--	(6.5,-1)--	(6,-1)--	(5.5,-1.083333)--	(5,-1.166667)--	(4.5,-1.25)--	(4,-1.333333)--	(3.5,-1.366667)--	(3,-1.35)--	(2.5,-1.333333)--	(2,-1.316667)--	(1.5,-1.3)--	(1,-1.283333)--	(0.5,-1.266667)--	(0,-1.25);

	\draw[smooth,red,dotted,thick] (0,-1.25)--	(0.5,-1.1333333)--	(1,-1.0166667)--	(1.5,-0.9)--	(2,-0.7833333)--	(2.5,-0.6666667)--	(3,-0.55)--	(3.5,-0.4333333)--	(4,-0.4)--	(4.5,-0.45)--	(5,-0.5)--	(5.5,-0.55)--	(6,-0.6)--	(6.5,-0.65)--	(7,-0.7)--	(7.5,-0.75)--	(8,-0.8);

	\draw[smooth,red,dotted,thick] (0,-1.25)--	(0.5,-1.13333333)--	(1,-1.01666667)--	(1.5,-0.9)--	(2,-0.78333333)--	(2.5,-0.66666667)--	(3,-0.55)--	(3.5,-0.43333333)--	(4,-0.33333333)--	(4.5,-0.25)--	(5,-0.16666667)--	(5.5,-0.08333333)--	(6,0)--	(6.5,0)--	(7,0)--	(7.5,0)--	(8,0);
	
	\fill[pattern color=red!80, opacity=1,pattern=dots] (0,-1.25)--	(0.5,-1.1333333)--	(1,-1.0166667)--	(1.5,-0.9)--	(2,-0.7833333)--	(2.5,-0.6666667)--	(3,-0.55)--	(3.5,-0.4333333)--	(4,-0.4)--	(4.5,-0.45)--	(5,-0.5)--	(5.5,-0.55)--	(6,-0.6)--	(6.5,-0.65)--	(7,-0.7)--	(7.5,-0.75)--	(8,-0.8)--	(8,0)--	(7.5,0)--	(7,0)--	(6.5,0)--	(6,0)--	(5.5,-0.08333333)--	(5,-0.16666667)--	(4.5,-0.25)--	(4,-0.33333333)--	(3.5,-0.43333333)--	(3,-0.55)--	(2.5,-0.66666667)--	(2,-0.78333333)--	(1.5,-0.9)--	(1,-1.01666667)--	(0.5,-1.13333333)--	(0,-1.25);

\end{tikzpicture}\caption{$\alpha=0.2$}\label{fig:arrivals_CNE_SPNE_1}
\end{subfigure}

\begin{subfigure}{\textwidth}
\begin{tikzpicture}[xscale=1.4,yscale=1.15]
  \def\xmin{0}
  \def\xmax{8.1}
  \def\ymin{-2.6}
  \def\ymax{0.1}
    \draw[->] (\xmin,\ymin) -- (\xmax,\ymin) node[right] {$\gamma$} ;
    \draw[->] (0,\ymin) -- (0,\ymax) node[above] {$a_i$} ;
    \foreach \x in {0,2,4,6,8}
    \node at (\x,\ymin) [below] {\x};
    \foreach \y in {-2,-1,0}
    \node at (0,\y) [left] {\y};

	\draw[smooth,blue,thick] (0,-1)--	(0.5,-1)--	(1,-1)--	(1.5,-1)--	(2,-1)--	(2.5,-1)--	(3,-1)--	(3.5,-1)--	(4,-1)--	(4.5,-1)--	(5,-1)--	(5.5,-1)--	(6,-1)--	(6.5,-1)--	(7,-1)--	(7.5,-1)--	(8,-1);

	\draw[smooth,red,thick] (0,0)--	(0.5,0)--	(1,0)--	(1.5,0)--	(2,0)--	(2.5,0)--	(3,0)--	(3.5,0)--	(4,0)--	(4.5,0)--	(5,0)--	(5.5,0)--	(6,0)--	(6.5,0)--	(7,0)--	(7.5,0)--	(8,0);

		\draw[smooth,blue,dotted,thick] (0,-1.03)--	(0.5,-1.15)--	(1,-1.3)--	(1.5,-1.45)--	(2,-1.6)--	(2.5,-1.75)--	(3,-1.9)--	(3.5,-2)--	(4,-2)--	(4.5,-2)--	(5,-2)--	(5.5,-2)--	(6,-2)--	(6.5,-2)--	(7,-2)--	(7.5,-2)--	(8,-2);

		\draw[smooth,blue,dotted,thick] (0,-1)--	(0.5,-1)--	(1,-1)--	(1.5,-1)--	(2,-1)--	(2.5,-1)--	(3,-1)--	(3.5,-1)--	(4,-1)--	(4.5,-1)--	(5,-1)--	(5.5,-1)--	(6,-1)--	(6.5,-1)--	(7,-1)--	(7.5,-1)--	(8,-1);

		\fill[pattern color=blue!80, opacity=1,pattern=dots] (0,-1.03)--	(0.5,-1.15)--	(1,-1.3)--	(1.5,-1.45)--	(2,-1.6)--	(2.5,-1.75)--	(3,-1.9)--	(3.5,-2)--	(4,-2)--	(4.5,-2)--	(5,-2)--	(5.5,-2)--	(6,-2)--	(6.5,-2)--	(7,-2)--	(7.5,-2)--	(8,-2)--	(8,-1)--	(7.5,-1)--	(7,-1)--	(6.5,-1)--	(6,-1)--	(5.5,-1)--	(5,-1)--	(4.5,-1)--	(4,-1)--	(3.5,-1)--	(3,-1)--	(2.5,-1)--	(2,-1)--	(1.5,-1)--	(1,-1)--	(0.5,-1)--	(0,-1);

	\draw[smooth,red,dotted,thick] (0,-0.03)--	(0.5,-0.15)--	(1,-0.3)--	(1.5,-0.45)--	(2,-0.6)--	(2.5,-0.75)--	(3,-0.9)--	(3.5,-1)--	(4,-1)--	(4.5,-1)--	(5,-1)--	(5.5,-1)--	(6,-1)--	(6.5,-1)--	(7,-1)--	(7.5,-1)--	(8,-1);

	\draw[smooth,red,dotted,thick] (0,0)--	(0.5,0)--	(1,0)--	(1.5,0)--	(2,0)--	(2.5,0)--	(3,0)--	(3.5,0)--	(4,0)--	(4.5,0)--	(5,0)--	(5.5,0)--	(6,0)--	(6.5,0)--	(7,0)--	(7.5,0)--	(8,0);

	\fill[ pattern color=red!80, opacity=1, pattern=dots] (0,-0.03)--	(0.5,-0.15)--	(1,-0.3)--	(1.5,-0.45)--	(2,-0.6)--	(2.5,-0.75)--	(3,-0.9)--	(3.5,-1)--	(4,-1)--	(4.5,-1)--	(5,-1)--	(5.5,-1)--	(6,-1)--	(6.5,-1)--	(7,-1)--	(7.5,-1)--	(8,-1)--	(8,0)--	(7.5,0)--	(7,0)--	(6.5,0)--	(6,0)--	(5.5,0)--	(5,0)--	(4.5,0)--	(4,0)--	(3.5,0)--	(3,0)--	(2.5,0)--	(2,0)--	(1.5,0)--	(1,0)--	(0.5,0)--	(0,0);

	\begin{customlegend}
    [legend entries={User $1$ (SPNE),User $2$ (SPNE),User $1$ (CNE),User $2$ (CNE)},
    legend style={font=\tiny,at={(10,1.8)}}]
    \addlegendimage{blue,thick}
    \addlegendimage{red,thick}
        \addlegendimage{blue,thick,dotted}
    \addlegendimage{red,thick,dotted}
    \end{customlegend}
    
	\end{tikzpicture}\caption{$\alpha=0.6$}\label{fig:arrivals_CNE_SPNE_2}
\end{subfigure}
%
%
\caption{SPNE and CNE arrival times for $\gamma\in[0,8]$ with $\beta=1$, and $\mathbf{d}^*=(0,0)$. The solid lines are the SPNE actions, and the dotted lines and regions represent the range of possible CNE.}
\label{fig:arrivals_CNE_SPNE}
\end{figure}

In Figure \ref{fig:arrivals_OPT_vs_E} the user arrival times in equilibrium are compared with those specified by the socially optimal profile,
\[
(a_1^o,a_2^o)=\argmin_{(a_1,a_2)}\{c_1(a_1,a_2)+c_2(a_1,a_2)\},
\]
which can be obtained as before by solving two two-dimensional quadratic programs and taking the minimum of both solutions. As expected, the SPNE arrival times are not optimal except for the extreme case of $\gamma=0$. Furthermore, if $\gamma$ is large enough then in all cases users travel on disjoint time intervals. Specifically, according to the SPNE user $1$ arrives at $-1$ and leaves at the desired time of $0$, incurring just the travel time cost of $\gamma$, while user $2$ arrives at $0$ and leaves at $1$, thus incurring a tardiness cost of $1$, and a total cost of $\gamma+1$. In the socially optimal profile, still for large values of $\gamma$, both users incur the same cost of $\gamma+0.25$ by arriving at $-1.5$ and $-0.5$, respectively. Note that it can easily be verified that in the two-user game both users incur the same cost under the socially optimal profile, for any value of $\gamma$. The sequential form of the game gives user $1$ the advantage of first action which allows him to travel alone and arrive at his desired time, when $\gamma$ is large enough. The socially optimal profile ``forces'' user $1$ to share the cost and arrive earlier than he would selfishly choose. The CNE displays very similar behaviour to the SPNE when $\gamma$ is small but as it gets bigger there are multiple equilibria that include both the socially optimal and SPNE solutions.

\begin{figure}[H]
\begin{tikzpicture}[xscale=1.3,yscale=1.6]
  \def\xmin{0}
  \def\xmax{8.1}
  \def\ymin{-2}
  \def\ymax{0.1}
    \draw[->] (\xmin,\ymin) -- (\xmax,\ymin) node[right] {$\gamma$} ;
    \draw[->] (0,\ymin) -- (0,\ymax) node[above] {$a_i$} ;
    \foreach \x in {0,2,4,6,8}
    \node at (\x,\ymin) [below] {\x};
    \foreach \y in {-2,-1.5,-1,-0.5,0}
    \node at (0,\y) [left] {\y};

	\draw[smooth,blue,thick] (0,-1.25)--	(0.5,-1.276)--	(1,-1.302)--	(1.5,-1.328)--	(2,-1.354)--	(2.5,-1.38)--	(3,-1.406)--	(3.5,-1.416)--	(4,-1.333)--	(4.5,-1.25)--	(5,-1.167)--	(5.5,-1.084)--	(6,-1)--	(6.5,-1)--	(7,-1)--	(7.5,-1)--	(8,-1);
	
	\draw[smooth,red,thick] (0,-1.25)--	(0.5,-1.1302)--	(1,-1.0104)--	(1.5,-0.8907)--	(2,-0.7709)--	(2.5,-0.6511)--	(3,-0.5313)--	(3.5,-0.4169)--	(4,-0.3334)--	(4.5,-0.25)--	(5,-0.167)--	(5.5,-0.084)--	(6,0)--	(6.5,0)--	(7,0)--	(7.5,0)--	(8,0);
	
		\draw[smooth,blue,dotted,thick] (0,-1.25)--	(0.5,-1.266667)--	(1,-1.283333)--	(1.5,-1.3)--	(2,-1.316667)--	(2.5,-1.333333)--	(3,-1.35)--	(3.5,-1.366667)--	(4,-1.4)--	(4.5,-1.45)--	(5,-1.5)--	(5.5,-1.55)--	(6,-1.6)--	(6.5,-1.65)--	(7,-1.7)--	(7.5,-1.75)--	(8,-1.8);;
		
		\draw[smooth,blue,dotted,thick] (0,-1.25)--	(0.5,-1.266667)--	(1,-1.283333)--	(1.5,-1.3)--	(2,-1.316667)--	(2.5,-1.333333)--	(3,-1.35)--	(3.5,-1.366667)--	(4,-1.333333)--	(4.5,-1.25)--	(5,-1.166667)--	(5.5,-1.083333)--	(6,-1)--	(6.5,-1)--	(7,-1)--	(7.5,-1)--	(8,-1);
		
		\fill[pattern color=blue!80, opacity=1,pattern=dots] (0,-1.25)--	(0.5,-1.266667)--	(1,-1.283333)--	(1.5,-1.3)--	(2,-1.316667)--	(2.5,-1.333333)--	(3,-1.35)--	(3.5,-1.366667)--	(4,-1.4)--	(4.5,-1.45)--	(5,-1.5)--	(5.5,-1.55)--	(6,-1.6)--	(6.5,-1.65)--	(7,-1.7)--	(7.5,-1.75)--	(8,-1.8)--	(8,-1)--	(7.5,-1)--	(7,-1)--	(6.5,-1)--	(6,-1)--	(5.5,-1.083333)--	(5,-1.166667)--	(4.5,-1.25)--	(4,-1.333333)--	(3.5,-1.366667)--	(3,-1.35)--	(2.5,-1.333333)--	(2,-1.316667)--	(1.5,-1.3)--	(1,-1.283333)--	(0.5,-1.266667)--	(0,-1.25);

	\draw[smooth,red,dotted,thick] (0,-1.25)--	(0.5,-1.1333333)--	(1,-1.0166667)--	(1.5,-0.9)--	(2,-0.7833333)--	(2.5,-0.6666667)--	(3,-0.55)--	(3.5,-0.4333333)--	(4,-0.4)--	(4.5,-0.45)--	(5,-0.5)--	(5.5,-0.55)--	(6,-0.6)--	(6.5,-0.65)--	(7,-0.7)--	(7.5,-0.75)--	(8,-0.8);

	\draw[smooth,red,dotted,thick] (0,-1.25)--	(0.5,-1.13333333)--	(1,-1.01666667)--	(1.5,-0.9)--	(2,-0.78333333)--	(2.5,-0.66666667)--	(3,-0.55)--	(3.5,-0.43333333)--	(4,-0.33333333)--	(4.5,-0.25)--	(5,-0.16666667)--	(5.5,-0.08333333)--	(6,0)--	(6.5,0)--	(7,0)--	(7.5,0)--	(8,0);
	
	\fill[pattern color=red!80, opacity=1,pattern=dots] (0,-1.25)--	(0.5,-1.1333333)--	(1,-1.0166667)--	(1.5,-0.9)--	(2,-0.7833333)--	(2.5,-0.6666667)--	(3,-0.55)--	(3.5,-0.4333333)--	(4,-0.4)--	(4.5,-0.45)--	(5,-0.5)--	(5.5,-0.55)--	(6,-0.6)--	(6.5,-0.65)--	(7,-0.7)--	(7.5,-0.75)--	(8,-0.8)--	(8,0)--	(7.5,0)--	(7,0)--	(6.5,0)--	(6,0)--	(5.5,-0.08333333)--	(5,-0.16666667)--	(4.5,-0.25)--	(4,-0.33333333)--	(3.5,-0.43333333)--	(3,-0.55)--	(2.5,-0.66666667)--	(2,-0.78333333)--	(1.5,-0.9)--	(1,-1.01666667)--	(0.5,-1.13333333)--	(0,-1.25);
	
			\draw[smooth,blue,densely dashed, thick] (0,-1.25)--	(0.5,-1.3125)--	(1,-1.375)--	(1.5,-1.4375)--	(2,-1.4999)--	(2.5,-1.4999)--	(3,-1.4999)--	(3.5,-1.4999)--	(4,-1.4999)--	(4.5,-1.5)--	(5,-1.5)--	(5.5,-1.5)--	(6,-1.5)--	(6.5,-1.5)--	(7,-1.5)--	(7.5,-1.5)--	(8,-1.5);
	
	\draw[smooth,red,densely dashed, thick] (0,-1.25)--	(0.5,-1.0625)--	(1,-0.875)--	(1.5,-0.6875)--	(2,-0.5)--	(2.5,-0.5)--	(3,-0.5)--	(3.5,-0.5)--	(4,-0.5)--	(4.5,-0.4999)--	(5,-0.4999)--	(5.5,-0.4999)--	(6,-0.4999)--	(6.5,-0.4999)--	(7,-0.4999)--	(7.5,-0.4999)--	(8,-0.4999);

    \begin{customlegend}
    [legend entries={User $2$ (SPNE),User $2$ (CNE),User $2$ (OPT),User $1$ (SPNE),User $1$ (CNE),User $1$ (OPT)},
    legend style={font=\tiny,at={(11,-0.3)}}]
    \addlegendimage{red}    
    \addlegendimage{red,thick,dotted}    
    \addlegendimage{red,thick,densely dashed}
    \addlegendimage{blue}
    \addlegendimage{blue,thick,dotted}
    \addlegendimage{blue,thick,densely dashed}    
    \end{customlegend}
\end{tikzpicture}
\caption{SPNE, CNE and socially optimal arrival times for $\gamma\in[0,8]$ with $\beta=1$, $\alpha=0.2$, and $\mathbf{d}^*=(0,0)$. The solid lines are the SPNE outcomes, the dashed lines are the optimal solutions, and the dotted lines and regions represent the range of possible CNE.}
\label{fig:arrivals_OPT_vs_E}
\end{figure}
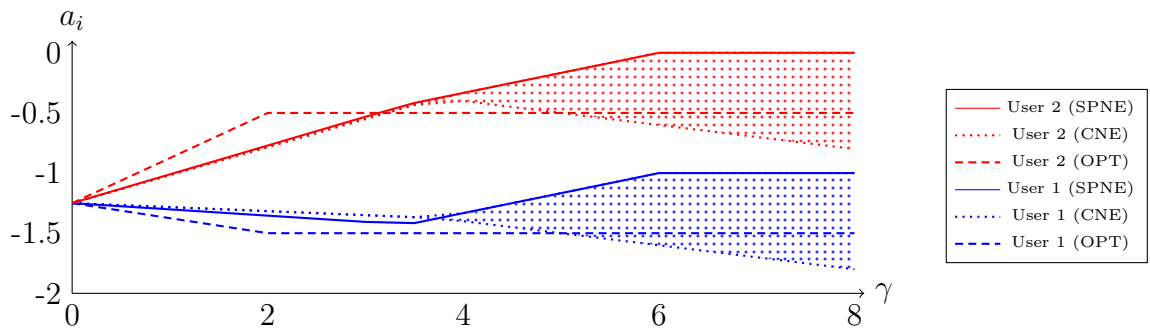

\section{The general arrival game}\label{sec:large_game}
When computing the backward induction or the CNE for the two-user example we had to solve two quadratic programs, one for every order of the pair $(a_2,d_1)$. We showed that already in this simple example the equilibrium was not necessarily unique, in both game formulations, and behaved differently for different parameter values. In order to compute all CNE for any number of users, $N\geq 2$, we must solve a quadratic program for every possible permutation of the departures and arrivals. The backward induction of the SPNE is even more demanding, as at every step $i=1,\ldots,N$, we must take into account the solution of steps $i+1,\ldots,N$, which will yield general (not necessarily quadratic) programs with respect to $a_i$. In Section \ref{sec:traffic_model} we explained that the permutations can be represented by the vectors $\mathbf{k}\in\mathcal{K}$, and made reference to the fact that $|\mathcal{K}|$ grows exponentially (specifically, as the Catalan numbers). Thus, even finding a single equilibrium point, let alone characterising all equilibria, is not an easy task. This leads us to suggest a heuristic method to find CNE points.

\subsection{Ordered best response algorithm}\label{sec:OBRA}
Best response dynamics are often useful for several reasons. Most notably, because points of convergence are Nash equilibrium points. Another advantage is that they are constructive and can provide a feasible evolution of the behaviour of users leading to an equilibrium solution. In many  types of games best response dynamics are known to converge to a unique Nash equilibrium for any initial point. This is the case for example in potential games \cite{MS1996}, and sub-modular games \cite{T1979}. Unfortunately, our game is not an instance of any of these games. We already know that the equilibrium is not necessarily unique. Furthermore, the non-convex and non-smooth form of the cost function suggests there is not much hope for a constructive iterative procedure that is guaranteed to find an equilibrium point in polynomial time. Nevertheless, if we optimistically apply the best response procedure and it does indeed converge for a given initial point then the output is a CNE.

We formally define the best-response algorithm starting at any ordered arrival profile $\mathbf{a}_0$, denoted by $\mathrm{BR}(\mathbf{a}_0)$, as follows.

\paragraph{Best response algorithm - $\mathrm{BR}(\mathbf{a}_0)$}
\begin{enumerate}
\item[(1)] Initiate $\mathbf{\hat{a}}:=\mathbf{a}_0$.
\item[(2)] For $i=1,\ldots,N$, let $\mathcal{B}_i=\mathcal{BR}(a_1,\ldots,a_{i-1},\hat{a}_{i+1},\ldots,\hat{a}_N)$ and compute
\begin{equation}
a_i=\left\{
	\begin{array}{ll}
		\hat{a}_i & \mbox{, } \hat{a}_i\in \mathcal{B}_i, \\
		\argmin\mathcal{B}_i & \mbox{, } \hat{a}_i\notin\mathcal{B}_i.
	\end{array}
\right.
\end{equation}
\item[(3)] If $\hat{a}_i=a_i,\ \forall i=1,\ldots,N$ then stop, otherwise set $\hat{a}_i=a_i$ and go back to step (2).
\end{enumerate}

\begin{remark}
The rule of updating the arrival time of a single user when the minimizer is not unique is defined in a way that guarantees changes will only occur when there is a strictly positive decrease in cost for the user. Specifically if $\hat{a}_i$ is in the best response group then we make no change, and otherwise we choose the smallest element, $\argmin\mathcal{B}_i$. This avoids the algorithm getting stuck in loops caused by multiple minimizers. Examples of such loops can easily be constructed. Note that this does not rule out other kinds of loops. We have not been able to construct other types of loops, nor have we been able to prove they don't exist.
\end{remark}

\begin{remark}
The algorithm may step outside of the action space because the optimization for user $i$ is on the interval $[a_{i-1},\infty)$, regardless of the arrival times of user $j=i+1,\ldots,N$. However, after a full iteration on all users the arrival profile will always be ordered. Furthermore, we argue that this is more natural than limiting the search to the interval $[a_{i-1},a_{i+1}]$, as the game is sequential and in the actual course of play the users have to react to the users preceding them. Note also that if we restrict the order during the iteration then a point of convergence is not necessarily an equilibrium.
\end{remark}

In \cite{RN2015} a method for optimizing the sum of total costs with respect to a single coordinate, while keeping all others fixed, was presented. We will now detail how step (2) of the BR algorithm can be carried out using a modification of the aforementioned method. Let $\mathbf{a}$ be an ordered arrival time vector, and suppose we now want to minimize the cost function $c_i(a_i,\mathbf{a}_{-i})$ by changing only $a_i$, i.e., find the set
\[
\argmin_{a\geq a_{i-1}}\{(d_i(a,\mathbf{a}_{-i})-d_i^*)^2 \,+ \gamma \,  (d_i(a,\mathbf{a}_{-i})-a_i)\}.
\] 
Recall that the model dynamics yield a piecewise-quadratic behaviour of the cost function with respect to any single coordinate, as was illustrated by the solid line in Figure \ref{fig:cost_arrival}. Therefore, the minimizer can be found by computing the minimal value of a single variable quadratic function on every continuous segment. Every segment corresponds to an order of all arrival and departure times (indexed by $\mathbf{k}\in\mathcal{K}$) and the coefficients of the piecewise-linear term $d_i(a,\mathbf{a}_{-i})$ are given by \eqref{eq:d_recursive}. The number of possible segments is given by the number of possible order changes. In Lemma 7 of \cite{RN2015} an upper bound of $\mathcal{O}(N^3)$ was given for the number of possible segments. This upper bound was obtained using worst-case analysis, and the number of computations was typically much less in practice.

We have shown in Proposition \ref{prop:N2_CNE} that there may exist multiple equilibria. Moreover, the algorithm may not converge at all if there exist loops in the best-response dynamics. To address these problems we suggest running the algorithm on a set of initial profiles, denoted by $\mathcal{A}_0$, which may be generated randomly or according to some specified structure. In addition, we define a maximum iteration parameter that aborts the algorithm if it does not converge within the predefined number of iterations. We denote the repeated algorithm by $\mathrm{RBR}(\mathcal{A}_0)$. This will also allow us to numerically analyse the price of anarchy which is defined as the ratio between the cost corresponding to the worst CNE and that of the socially optimal arrival profile. Formally,
\[
\mathrm{PoA}:=\frac{\max_{\mathbf{a}^e\in\mathcal{A}^e}\sum_{i\in\mathcal{N}}c_i(\mathbf{a}^e)}{\sum_{i\in\mathcal{N}}c_i(\mathbf{a}^o)},
\] 
where $\mathcal{A}^e$ is the set of CNE and $\mathbf{a}^o$ is the socially optimal arrival profile. In the numerical analysis we substitute $\mathcal{A}^e$ with the output of $\mathrm{RBR}(\mathcal{A}_0)$.

\section{Numerical analysis}\label{sec:numerical}
Throughout this section we will compute equilibrium results using the $\mathrm{BR}$ and $\mathrm{RBR}$ algorithms and compare them with the socially optimal solution obtained using the algorithms presented in \cite{RN2015}. When possible ($N\leq 15$), the optimal solution is exact and otherwise it is a result of a polynomial-time search method that is guaranteed to converge to a local minimum. We use superscript notations of ``$o$'' and ``$e$'' to denote socially optimal and CNE, respectively.

In table \ref{tbl:numerical_RBR} we summarize the convergence statistics of 100 instances of the $RBR$ with randomized initial profiles, for different user population sizes. 

\begin{table}[H]
\centering
\scalebox{0.65}{
\begin{tabular}{|c|c|c|c|c|c|} \hline $N$ & Convergence $\%$ & Mean $\#$ of Iterations & Min $\#$ of Iterations & Max $\#$ of Iterations & $\#$ of distinct CNE  \\ \hline
$20$ & $100\%$  & $6.64$ & $5$ & $8$ & $2$ \\ \hline
$50$ & $100\%$  & $7.11$ & $6$ & $8$ & $2$ \\ \hline
$80$ & $100\%$  & $7.44$ & $7$ & $9$ & $5$ \\ \hline
\end{tabular}}
\caption{Summary of 100 $RBR$ instances for three examples with different population sizes, and parameters $\beta=1$, $\alpha=0.7\frac{\beta}{N}$, $\gamma=\frac{1}{N}$, and $\mathbf{d}^*$ equals the $N$ quantiles of a normal distribution with mean 0 and variance 1. The random initial points where selected as follows: 50 vectors were sampled from as the order statistic of an iid uniform distribution on $(-1,1)$, and 50 vectors were sampled from a normal distribution with mean $\mathbf{d}^*-\frac{1}{\beta}$ and variance between $0$ and $2$.}
\label{tbl:numerical_RBR}
\end{table}

All instances converged to equilibrium points within a small number of iterations. Multiple equilibria were found in all three examples, although the different CNE do not seem significantly different from each other. For example in the example of $N=50$ the maximum absolute distance between any two equilibrium arrival schedules was
\[
\max_{\mathbf{\hat{a}},\mathbf{\bar{a}}\in\mathcal{A}^e}\sum_{i=1}^N|\hat{a}_i-\bar{a}_i|=1.16,
\]
and the maximum distance between the arrival times of the same user in different equilibria was
\[
\max_{\mathbf{\hat{a}},\mathbf{\bar{a}}\in\mathcal{A}^e}\max_{i=1,\ldots,N}|\hat{a}_i-\bar{a}_i|=0.09.
\]
To put these numbers in perspective, the free flow travel time in this example was $1$, the slowest travel time in a full system was $\sim 4.3$, and the time from the first arrival to the last departure in equilibrium was $\sim 5$. Therefore, a deviation of at most $0.1$ between arrival times in all schedules suggests that all CNE schedules were effectively the same, and the small differences were perhaps due to the very non-smooth behaviour of the cost function. Recall that in Proposition \ref{prop:N2_CNE} it was shown that for $N=2$ infinite equilibria may exist on a continuous interval, hence it is not surprising if this occurs in the general game as well for some small interval.

Despite the positive convergence results demonstrated in the above examples, it should be noted that the scaling of the parameters is important. We ran the $N=50$ example twice more, with a single parameter change each time, first we set $\alpha=0.9\frac{\beta}{N}$, and then we set $\mathbf{d}^*$ as the $N$ quantiles of a normal distribution with mean 0 and variance 0.1 ($\alpha$ was as in the original example). In both cases the algorithm did not converge on any of the 100 attempts after 100 iterations. In both of these examples the negative effects users incur on one another are very high. The lack of convergence may indicate that the algorithm suffers from numerical precision difficulties when users want to arrive at virtually the same time, with respect to the congested travel speeds. If interaction costs are very high (not scaled) it is likely that there will be almost no overlap between users in equilibrium, and in this case other approximation methods, perhaps of a combinatorial nature, may be more suitable than the Best-Response algorithm presented here. 

Figure \ref{fig:ds_de_do} shows the CNE and socially optimal departure times compared with the desired departure times for a specific set of parameters. In equilibrium users depart from the system closer to their desired departure time than under the socially optimal policy. This entails higher congestion costs and overall travel times. Specifically, in this example the average cost for deviation from the desired time is $3.6$ times higher under the socially optimal policy, but the average travel time costs are $3.8$ times lower than in equilibrium. The individual user travel times can be seen in Figure \ref{fig:travel_times}. The price of anarchy in this example is $1.07$. 

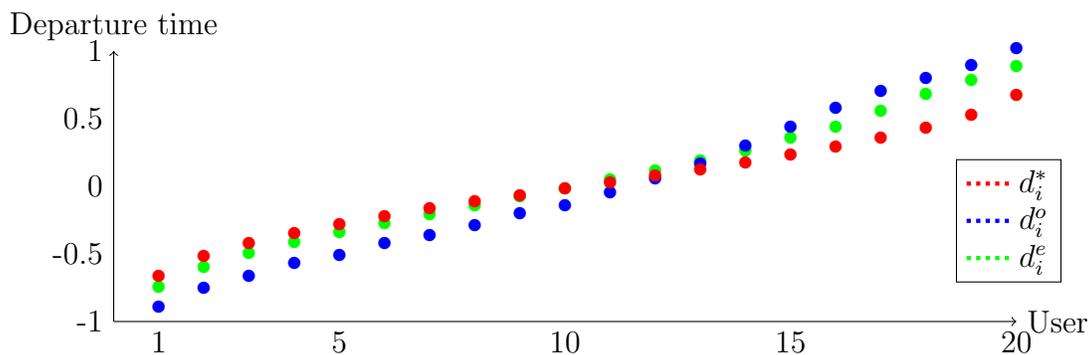
\begin{figure}[H]
\centering
\begin{tikzpicture}[xscale=0.6,yscale=1.8]
  \def\xmin{0}
  \def\xmax{20}
  \def\ymin{-1}
  \def\ymax{1}
    \draw[->] (\xmin,\ymin) -- (\xmax,\ymin) node[right] {User} ;
    \draw[->] (\xmin,\ymin) -- (\xmin,\ymax) node[above] {Departure time} ;
    \foreach \x in {1,5,10,15,20}
    \node at (\x,\ymin) [below] {\x};
    \foreach \y in {-1,-0.5,0,0.5,1}
    \node at (0,\y) [left] {\y};
    
    \foreach \Point in {(1,-0.752),	(2,-0.606),	(3,-0.502),	(4,-0.417),	(5,-0.348),	(6,-0.279),	(7,-0.211),	(8,-0.145),	(9,-0.08),	(10,-0.023),	(11,0.041),	(12,0.112),	(13,0.183),	(14,0.258),	(15,0.353),	(16,0.435),	(17,0.55),	(18,0.676),	(19,0.781),	(20,0.881)}	
    {\node[green] at \Point {\textbullet};}

    \foreach \Point in {(1,-0.89965704),	(2,-0.7563426),	(3,-0.66907036),	(4,-0.57254261),	(5,-0.51251957),	(6,-0.42927757),	(7,-0.36726784),	(8,-0.29291126),	(9,-0.20816913),	(10,-0.14739121),	(11,-0.05317523),	(12,0.04931436),	(13,0.16271619),	(14,0.29040115),	(15,0.43684372),	(16,0.57656515),	(17,0.6952148),	(18,0.79270345),	(19,0.89216265),	(20,1.01243784)}
    {\node[blue] at \Point {\textbullet};}

    \foreach \Point in {(1,-0.66735648),	(2,-0.52366869),	(3,-0.42702821),	(4,-0.35045714),	(5,-0.28497721),	(6,-0.22637953),	(7,-0.17229092),	(8,-0.12119218),	(9,-0.07200495),	(10,-0.02388684),	(11,0.02388684),	(12,0.07200495),	(13,0.12119218),	(14,0.17229092),	(15,0.22637953),	(16,0.28497721),	(17,0.35045714),	(18,0.42702821),	(19,0.52366869),	(20,0.66735648)}	
    {\node[red] at \Point {\textbullet};}

    \begin{customlegend}
    [legend entries={$d_i^*$, $d_i^o$, $d_i^e$},
    legend style={at={(21,0.2)}}]    
    \addlegendimage{red,dotted,ultra thick}    
    \addlegendimage{blue,dotted,ultra thick} 
    \addlegendimage{green,dotted,ultra thick}    
    \end{customlegend}
\end{tikzpicture}
\caption{Desired, CNE, and socially optimal departure times for $N=20$, $\beta=1$, $\alpha=0.035$, $\gamma=0.35$, and ${\mathbf d}^*$ equals the $20$ quantiles of a normal distribution with mean 0 and variance $0.16$.}
\label{fig:ds_de_do}
\end{figure}

\begin{figure}[H]
\centering
\begin{tikzpicture}[xscale=0.6,yscale=3.5]
  \def\xmin{0}
  \def\xmax{20}
  \def\ymin{1.1}
  \def\ymax{2.1}
    \draw[->] (\xmin,\ymin) -- (\xmax,\ymin) node[right] {User} ;
    \draw[->] (\xmin,\ymin) -- (\xmin,\ymax) node[above] {Travel time} ;
    \foreach \x in {1,5,10,15,20}
    \node at (\x,\ymin) [below] {\x};
    \foreach \y in {1.5,2}
    \node at (0,\y) [left] {\y};
    
    \foreach \Point in {(1,1.651),	(2,1.736),	(3,1.79),	(4,1.831),	(5,1.865),	(6,1.893),	(7,1.914),	(8,1.931),	(9,1.941),	(10,1.949),	(11,1.95),	(12,1.939),	(13,1.924),	(14,1.897),	(15,1.854),	(16,1.796),	(17,1.702),	(18,1.531),	(19,1.421),	(20,1.319)}	
    {\node[green] at \Point {\textbullet};}

    \foreach \Point in {(1,1.456823),	(2,1.518378),	(3,1.56966),	(4,1.598573),	(5,1.630515),	(6,1.649327),	(7,1.671386),	(8,1.682633),	(9,1.695242),	(10,1.699009),	(11,1.699009),	(12,1.686996),	(13,1.658744),	(14,1.607892),	(15,1.524747),	(16,1.440012),	(17,1.379879),	(18,1.316815),	(19,1.252993),	(20,1.186213)}
    {\node[blue] at \Point {\textbullet};}

    \begin{customlegend}
    [legend entries={ $d_i^e-a_i^e$,$d_i^o-a_i^o$},
    legend style={at={(10.5,1.52)}}]   
    \addlegendimage{green,dotted,ultra thick}     
    \addlegendimage{blue,dotted,ultra thick}    
    \end{customlegend}
\end{tikzpicture}
\caption{Individual user travel times of CNE and socially optimal profiles, for $N=20$, $\beta=1$, $\alpha=0.035$, $\gamma=0.35$, and ${\mathbf d}^*$ equals the $20$ quantiles of a normal distribution with mean 0 and variance $0.16$.}
\label{fig:travel_times}
\end{figure}
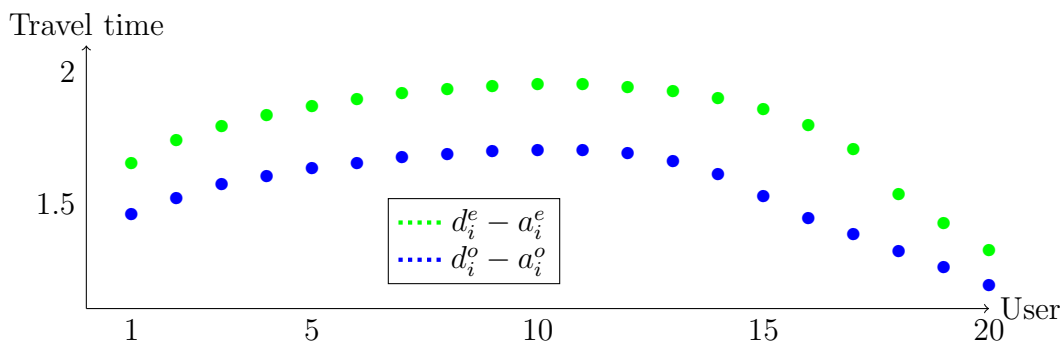

In Figure \ref{fig:co_ce} we illustrate the individual user costs under both schemes. In equilibrium the cost is uni-modal with the users wishing to arrive in the middle of the interval incurring the most cost. The cost under the socially optimal policy appears bi-modal, with the cost more spread out. In Figure \ref{fig:queue_size} we illustrate the smoothed queue size dynamics for both profiles. In equilibrium the system fills up faster, reaches a higher level, and is cleared earlier. This is because arrival times are less spread out in equilibrium. Hence, although users travel faster under the socially optimal policy, the last users will leave later simply because they arrived later than had they been free to choose their arrival time.

\begin{figure}[H]
\centering
\begin{tikzpicture}[xscale=0.6,yscale=17]
  \def\xmin{0}
  \def\xmax{20}
  \def\ymin{0.42}
  \def\ymax{0.6}
    \draw[->] (\xmin,\ymin) -- (\xmax,\ymin) node[right] {User} ;
    \draw[->] (\xmin,\ymin) -- (\xmin,\ymax) node[above] {Cost} ;
    \foreach \x in {1,5,10,15,20}
    \node at (\x,\ymin) [below] {\x};
    \foreach \y in {0.45,0.5,0.55}
    \node at (0,\y) [left] {\y};
    
    \foreach \Point in {(1,0.505),	(2,0.529),	(3,0.545),	(4,0.557),	(5,0.566),	(6,0.573),	(7,0.579),	(8,0.583),	(9,0.586),	(10,0.588),	(11,0.588),	(12,0.587),	(13,0.584),	(14,0.58),	(15,0.574),	(16,0.563),	(17,0.55),	(18,0.53),	(19,0.494),	(20,0.436)}	
    {\node[green] at \Point {\textbullet};}

    \foreach \Point in {(1,0.4990764),	(2,0.5203162),	(3,0.5364464),	(4,0.5391721),	(5,0.5481705),	(6,0.5456932),	(7,0.5472782),	(8,0.5436911),	(9,0.5362799),	(10,0.5343522),	(11,0.5250374),	(12,0.5163025),	(13,0.5097674),	(14,0.508124),	(15,0.5158982),	(16,0.5278622),	(17,0.5360781),	(18,0.5305847),	(19,0.5136489),	(20,0.4767662)}
    {\node[blue] at \Point {\textbullet};}

    \begin{customlegend}
    [legend entries={ $c_i^e$,$c_i^o$},
    legend style={at={(10.5,0.5)}}]   
    \addlegendimage{green,dotted,ultra thick}     
    \addlegendimage{blue,dotted,ultra thick}    
    \end{customlegend}
\end{tikzpicture}
\caption{Individual user costs of CNE and socially optimal profiles, for $N=20$, $\beta=1$, $\alpha=0.035$, $\gamma=0.35$, and ${\mathbf d}^*$ equals the $20$ quantiles of a normal distribution with mean 0 and variance $0.16$.}
\label{fig:co_ce}
\end{figure}
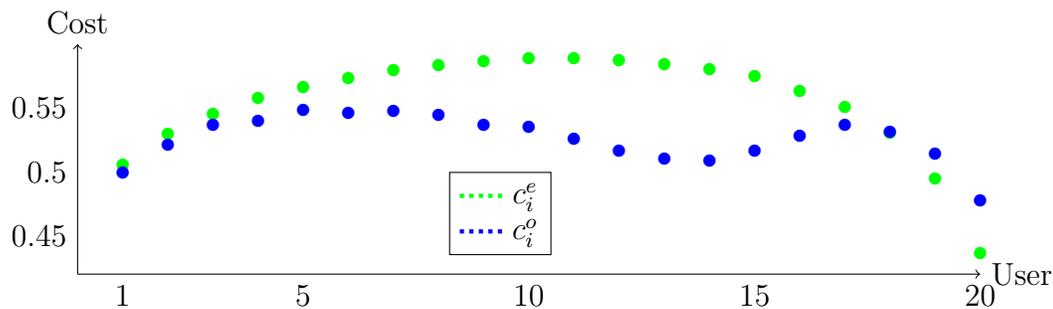

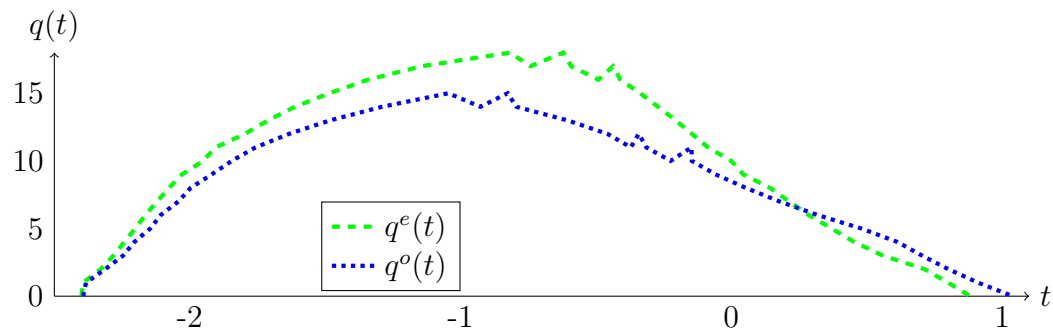
\begin{figure}[H]
\centering
\begin{tikzpicture}[xscale=3.6,yscale=0.18]
  \def\xmin{-2.5}
  \def\xmax{1.1}
  \def\ymin{0}
  \def\ymax{18}
    \draw[->] (\xmin,\ymin) -- (\xmax,\ymin) node[right] {$t$} ;
    \draw[->] (\xmin,\ymin) -- (\xmin,\ymax) node[above] {$q(t)$} ;
    \foreach \x in {-2,-1,0,1}
    \node at (\x,\ymin) [below] {\x};
    \foreach \y in {0,5,10,15}
    \node at (\xmin,\y) [left] {\y};
    
    \draw[smooth,green,dashed, ultra thick] (-2.4,0)--(-2.394,1)--	(-2.329,2)--	(-2.282,3)--	(-2.242,4)--	(-2.202,5)--	(-2.164,6)--	(-2.121,7)--	(-2.077,8)--	(-2.029,9)--	(-1.951,10)--	(-1.905,11)--	(-1.796,12)--	(-1.711,13)--	(-1.612,14)--	(-1.485,15)--	(-1.342,16)--	(-1.143,17)--	(-0.818,18)--	(-0.742,17)--	(-0.621,18)--	(-0.592,17)--	(-0.491,16)--	(-0.437,17)--	(-0.41,16)--	(-0.337,15)--	(-0.271,14)--	(-0.206,13)--	(-0.146,12)--	(-0.087,11)--	(-0.001,10)--	(0.044,9)--	(0.143,8)--	(0.213,7)--	(0.285,6)--	(0.37,5)--	(0.454,4)--	(0.559,3)--	(0.713,2)--	(0.8,1)--	(0.882,0); 
    
    \draw[smooth,blue,dotted, ultra thick] (-2.394,0)--	(-2.383,1)--	(-2.312,2)--	(-2.246,3)--	(-2.205,4)--	(-2.146,5)--	(-2.107,6)--	(-2.042,7)--	(-1.997,8)--	(-1.919,9)--	(-1.845,10)--	(-1.757,11)--	(-1.633,12)--	(-1.481,13)--	(-1.29,14)--	(-1.045,15)--	(-0.927,14)--	(-0.826,15)--	(-0.794,14)--	(-0.606,13)--	(-0.458,12)--	(-0.371,11)--	(-0.342,12)--	(-0.314,11)--	(-0.224,10)--	(-0.147,11)--	(-0.146,10)--	(-0.058,9)--	(0.054,8)--	(0.178,7)--	(0.318,6)--	(0.48,5)--	(0.614,4)--	(0.704,3)--	(0.801,2)--	(0.911,1)--	(1.039,0);

    \begin{customlegend}
    [legend entries={ $q^e(t)$,$q^o(t)$},
    legend style={at={(-1,7)}}]   
    \addlegendimage{green,dashed,ultra thick}     
    \addlegendimage{blue,dotted,ultra thick}    
    \end{customlegend}
\end{tikzpicture}
\caption{Smoothed queue size dynamics of the CNE and socially optimal profiles, for $N=20$, $\beta=1$, $\alpha=0.035$, $\gamma=0.35$, and ${\mathbf d}^*$ equals the $20$ quantiles of a normal distribution with mean 0 and variance $0.16$.}
\label{fig:queue_size}
\end{figure}

It should be highlighted that the patterns exhibited in Figures \ref{fig:ds_de_do}-\ref{fig:queue_size} are robust with respect to the game parameters. In particular, the difference between equilibrium and optimal arrival profiles were in the same direction for all instances tested, with varying effect sizes, and in this sense the examples presented here are a good representation of the general case.

\section{Concluding remarks}\label{sec:conclusion}
In this paper we have presented and analysed a novel ordered arrival time game to a congested processor sharing system, with a heterogeneous discrete user population. For a two-user example we provided explicit equilibrium analysis for two solution concepts, the Subgame Perferct Nash Equilibrium and the Cournot Nash Equilibrium. This example highlighted the required steps for computing equilibrium points, and the resulting difficulties. It was shown that the ordered form of the game played an important role in both solutions. The two solutions display similar outcomes for certain game parameters, namely when interaction is not very costly. It was shown that there may exist multiple CNE, and that for some parameter values both the SPNE path and the socially optimal solution are CNE. For the general population model we presented a heuristic algorithm to compute CNE and analysed it numerically. It was shown that this algorithm is very efficient if the parameters are scaled in a manner that does not allow the interaction costs to be too high. This allowed us to compare the equilibrium arrival process and the resulting congestion process with the corresponding socially optimal solution. The numerical analysis suggests that in equilibrium the travel times and queues are longer, but the users arrive closer to their desired times. This conclusion was supported by testing many population sizes and parameter values.
 
There are several straightforward generalizations of the user cost functions that can be made. One can consider the general form:
\[
c_i(\mathbf{a})=f_i\left((d_i-d_i^*)^+\right)+h_i\left((d_i-d_i^*)^+\right)+g_i(d_i-a_i),
\]
where $f_i$, $g_i$, and $h_i$ are continuous convex functions for every $i\in\mathcal{N}$. This allows for non-quadratic deviation penalties and a different penalty for being late or early. There is no substantial difference in the general model and all of our analysis still holds, with slight technical modifications. In particular, one needs to solve constrained convex programs instead of quadratic, and the differentiation in penalties for being late or early will lead to a larger number of piecewise-convex segments. The choice of the specific cost function in this work was made for the sake of brevity. A network arrival time game, in which users can choose arrival times and travel routes, may also be analysed using our model. The travel dynamics will be more complex when users travel on several subsequent segments with a linear slowdown, but can still be solved recursively using the framework we have provided.

A non-straightforward generalization is allowing the users to have heterogeneous service demand, which translates to different travel distances in the traffic context. If this is the case then there is no longer a reason to order the arrivals according to their desired departure times, which adds a new layer of complexity to the analysis. It seems that there is not much hope for tractable solutions in this case and a different approach, perhaps combining a heuristic on order permutations, is called for. 

An additional open research question arising from this work is whether a fluid approximation, in the sense of \cite{KP2006}, of the game can be solved. In the fluid version the piecewise-linear dynamics will take the form of a set of retarded non-linear differential equations. It is unclear to the authors at this point if the corresponding fluid game can be solved directly or perhaps only by using a mean-field approach with less restrictive dynamics.

\small{
\section*{Acknowledgements}
We thank Yoni Nazarathy for his comments and advice and an anonymous reviewer for his/her most helpful comments. We are grateful to the Australia-Israel Scientific Exchange Foundation (AISEF) for supporting the first author's visit to The Swinburne University of Technology. This work was supported by the Australian Research Council (ARC) Future Fellowships grant FT120100723, and the Israel Science Foundation grant no. 1319/11.}

{\footnotesize\bibliography{C:/Users/Liron/Dropbox/University/Research/Full_Bibliography/BigBib}}

\begin{thebibliography}{10}

\bibitem{ADL1993}
R.~Arnott, A.~de~Palma, and R.~Lindsey.
\newblock A structural model of peak-period congestion: A traffic bottleneck
  with elastic demand.
\newblock {\em American Economic Review}, 83(1):161--79, March 1993.

\bibitem{book_OW2011}
J.~de~Dios~Ort{\'u}zar and L.~G. Willumsen.
\newblock {\em Modelling transport}.
\newblock John Wiley \& Sons, 2011.

\bibitem{GH1983}
A.~Glazer and R.~Hassin.
\newblock {?/M/$\textit{1}$: On the equilibrium distribution of customer
  arrivals}.
\newblock {\em European Journal of Operational Research}, 13(2):146--150, 1983.

\bibitem{HR2015}
M.~Haviv and L.~Ravner.
\newblock Strategic timing of arrivals to a finite queue multi-server loss
  system.
\newblock {\em Queueing Systems}, 81(1):71--96, 2015.

\bibitem{JR2014}
S.~Juneja and T.~Raheja.
\newblock The concert queueing game: Fluid regime with random order service.
\newblock {\em International Game Theory Review}, 17(02):1540012, 2015.

\bibitem{JS2013}
S.~Juneja and N.~Shimkin.
\newblock The concert queueing game: strategic arrivals with waiting and
  tardiness costs.
\newblock {\em Queueing Systems}, 74(4):369--402, 2013.

\bibitem{KP2006}
S.~Kachani and G.~Perakis.
\newblock Fluid dynamics models and their applications in transportation and
  pricing.
\newblock {\em European journal of operational research}, 170(2):496--517,
  2006.

\bibitem{MH1984}
H.~Mahmassani and R.~Herman.
\newblock Dynamic user equilibrium departure time and route choice on idealized
  traffic arterials.
\newblock {\em Transportation Science}, 18(4):362--384, 1984.

\bibitem{MS1996}
D.~Monderer and L.~S. Shapley.
\newblock Potential games.
\newblock {\em Games and Economic Behavior}, 14(1):124 -- 143, 1996.

\bibitem{OR2008}
H.~Otsubo and A.~Rapoport.
\newblock Vickrey’s model of traffic congestion discretized.
\newblock {\em Transportation Research Part B: Methodological},
  42(10):873--889, 2008.

\bibitem{R2014}
L.~Ravner.
\newblock Equilibrium arrival times to a queue with order penalties.
\newblock {\em European Journal of Operational Research}, 239(2):456--468,
  2014.

\bibitem{RN2015}
L.~Ravner and Y.~Nazarathy.
\newblock Scheduling for a processor sharing system with linear slowdown.
\newblock {\em arXiv preprint arXiv:1508.03136}, 2015.

\bibitem{T1979}
D.~M. Topkis.
\newblock Equilibrium points in nonzero-sum n-person submodular games.
\newblock {\em SIAM Journal on Control and Optimization}, 17(6):773--787, 1979.

\bibitem{V1969}
W.~S. Vickrey.
\newblock Congestion theory and transport investment.
\newblock {\em The American Economic Review}, 59(2):251--260, 1969.

\end{thebibliography}

\end{document}